\crefname{section}{Sec.}{Secs.}
\crefname{table}{Tab.}{Tabs.}
\crefname{figure}{Fig.}{Figs.}
\crefname{definition}{Def.}{Defs.}
\crefname{lema}{Lem.}{Lems.}
\crefname{theorem}{Thm.}{Thms.}
\crefname{corollary}{Cor.}{Cors.}
\newtheorem{theorem}{Theorem}
\newtheorem{definition}{Definition}
\newtheorem{lemma}{Lemma}
\newtheorem{corollary}{Corollary}
\newcommand{\galg}{\mathfrak{g}}
\newcommand{\kalg}{\mathfrak{k}}
\newcommand{\malg}{\mathfrak{m}}
\newcommand{\halg}{\mathfrak{h}}
\newcommand{\ham}{\mathcal{H}}
\begin{document}

\title{Algebraic Compression of Quantum Circuits for Hamiltonian Evolution}

\author{Efekan K\"okc\"u}
\affiliation{Department of Physics, North Carolina State University, Raleigh, North Carolina 27695, USA}

\author{Daan Camps}
\affiliation{Computational Research Division,
            Lawrence Berkeley National Laboratory,
            Berkeley, CA 94720, USA}
            
\author{Lindsay Bassman}
\affiliation{Computational Research Division,
            Lawrence Berkeley National Laboratory,
            Berkeley, CA 94720, USA}
            
\author{J.~K.~Freericks}
\affiliation{Department of Physics, Georgetown University, 37th and O Sts. NW, Washington, DC 20057 USA}

\author{Wibe~A.~de~Jong}
\affiliation{Computational Research Division,
            Lawrence Berkeley National Laboratory,
            Berkeley, CA 94720, USA}
            
\author{Roel~Van~Beeumen}
\affiliation{Computational Research Division,
            Lawrence Berkeley National Laboratory,
            Berkeley, CA 94720, USA}
            
\author{Alexander F. Kemper}
\affiliation{Department of Physics, North Carolina State University, Raleigh, North Carolina 27695, USA}

\date{\today}

\begin{abstract}
Unitary evolution under a time dependent Hamiltonian is a key component
of simulation on quantum hardware. Synthesizing the corresponding
quantum circuit is typically done by breaking the evolution into
small time steps, also known as Trotterization, which leads to
circuits whose depth scales with the number of steps. 
When the circuit elements are limited to
a subset of $SU(4)$ --- or equivalently, when the Hamiltonian
may be mapped onto free fermionic models ---
several identities exist that combine and simplify
the circuit. Based on this, we
present an algorithm that compresses the Trotter steps into
a single block of quantum gates.
This results in a fixed depth
time evolution for certain classes of Hamiltonians. 
We explicitly show how this algorithm works for
several spin models, and demonstrate
its use for adiabatic state preparation of the
transverse field
Ising model.
\end{abstract}

\maketitle

\section{Introduction}

Quantum computers were initially conceived of as a tool for simulating quantum systems \cite{feynman1982simulating}, and indeed, this is seen as one of the most promising applications for near-term quantum processors \cite{preskill2018quantum}. The main challenge for implementing such simulations on a quantum computer is to generate a circuit---usually comprised of single- and two-qubit gates---that mimics the effect of the time evolution operator $U(t)$ on qubits, which is termed \emph{unitary synthesis}. Unitary synthesis in the most generic case is exponentially hard due to exponential growth of the Hilbert space dimension, i.e. size of the matrix $U(t)$, with the system size \cite{khaneja2005constructive,drury2008quantum_shannon}. Though there are ways to ameliorate this problem by leveraging the algebra generated by the terms in the Hamiltonian \cite{kokcu2021fixed}, generating an exact circuit remains classically NP-hard for interacting fermion models.   
In practice, an approximate circuit implementation of the unitary is often sufficient
and can even be imposed by technological constraints as we discuss next.

The challenges are exacerbated by the fact that near-term quantum computers are small and noisy \cite{preskill2018quantum}, which greatly limits how deep quantum circuits can be before their results are indistinguishable from random noise. This means that circuit synthesis techniques must also attempt to optimize quantum circuits to be as shallow as possible;
such circuit optimization is an NP-hard problem for general quantum circuits \cite{herr17, botea2018}.

Generating a circuit is particularly difficult for dynamic simulations which simulate a quantum system evolving in time.  This is because a separate circuit must be generated for each time-step, and for a general Hamiltonian, the circuit depth increases with increasing simulation time \cite{berry2007efficient, childs2009limitations,Childs2021}.  

Here, we present an algorithm for compressing circuits for simulations of time-evolution to a constant depth, independent of increasing simulation time, for a particular class of Hamiltonians. The common feature of these Hamiltonians is that they can be mapped to free fermionic Hamiltonians \cite{chapman2020characterization}; equivalently, they produce a Hamiltonian algebra that scales polynomially in the number of qubits \cite{kokcu2021fixed}. This property
allows us to circumvent the no-fast-forwarding theorem \cite{berry2007efficient,atia2017fast,gu2021fast}.

Our approach uses a first order Trotter-Suzuki decomposition of the time-evolution operator
of a time-dependent Hamiltonian $\ham(t)$ for the unitary synthesis step.
The unitary operator $U(t)$ for this process obeys 
\begin{align}
    \partial_t U(t) = -i \ham(t) U(t)
\end{align}
with $U(t=0) = \mathcal{I}$, i.e. the identity operator. This equation can be approximately solved by Trotter decomposition with
a time step size $\Delta t$:
\begin{align}\label{eq:solution}
    U(t) = \lim_{\Delta t \rightarrow 0} e^{-i\Delta t \ham(t_N)} \cdots e^{-i\Delta t \ham(t_2)}e^{-i\Delta t \ham(t_1)},
\end{align}
where $t_n = (n-1)\Delta t$, and $t_N = t$ (kept constant while taking the limit). After approximating each small time step $e^{-i\Delta t \ham(t_i)}$, the right hand side of Eq.~\eqref{eq:solution} often has a straightforward circuit implementation that approximates $U(t)$.
The accuracy of the Trotter decomposition depends on the size of $\Delta t$. This is the trade-off with the Trotter decomposition:
the smaller the time step, the longer the circuit. 
Moreover, regardless of the choice of $\Delta t$, the circuit
grows with increasing simulation time.

In order to deal with the circuits that grow as a function of
simulation time, several approaches exist, each with inherent
limitations. We limit the discussion
to those methods that can be executed on Noisy Intermediate Scale Quantum (NISQ) hardware, putting
aside the others \cite{Berry2015,Low2017, Low2018, Kalev2020, Haah2021, cirstoiu2020variational}.
General purpose transpilers
can be used to shorten the growing-depth circuits, but these do
not typically take advantage of the algebraic structure of the
problem, and are limited to how much compression can be achieved,
ultimately placing a limit on the largest feasible simulation time.
Constant depth circuits can be obtained variationally by
optimizing a generic circuit via a hybrid classical-quantum
algorithm \cite{commeau2020variational}. However, these approximate circuits are only applicable to time evolution of
a particular initial state and their error grows with simulation time. Alternatively, a more general constant depth circuit may be found by using the distance from a specific target unitary as the cost
function \cite{bassman2021constant} --- this approach does not scale favorably in the number of qubits due to the necessarily large
number of parameters in the classical optimization. Finally, constructive approaches
to circuit synthesis that rely on the algebraic structure of the
unitaries have been proposed \cite{khaneja2005constructive,earp2005constructive,kokcu2021fixed}, but these require a similar classical
optimization step. 

Our proposed method, while not applicable to all problems, is a constructive proof of the circuit ansatz used in Ref.~\cite{bassman2021constant}. Because of the constructive nature, it scales to 1,000s of qubits, which
is out of the reach of previous methods
 \cite{bassman2021constant,kokcu2021fixed} that require optimization.
This work is accompanied by a dual paper \cite{simax} which focuses
on the underlying matrix structures of the circuit elements, and on the efficient and numerically stable implementation of the circuit compression algorithms for each model discussed below. We show that this compression technique scales polynomially in the number of qubits, and is thus applicable for the foreseeable future on quantum hardware.
Our codes are made available as part of the fast free fermion compiler (\texttt{F3C}) \cite{f3c, f3cpp} on \url{https://github.com/QuantumComputingLab}.
\texttt{F3C} is build based on the \texttt{QCLAB} toolbox \cite{qclab,qclabpp} for creating and
representing quantum circuits.

We demonstrate the power of our compressed circuits by showing how they enable adiabatic state preparation (ASP), where a system is evolved under a slowly varying Hamiltonian in order to prepare a non-trivial ground state.  The system is initialized in the ground state of the initial Hamiltonian, which is presumed to be trivial to prepare, and ends in the (non-trivial) ground state of the final Hamiltonian given slow enough variation of the Hamiltonian. 
Given the constant depth nature of the circuit for different times, the state preparation for long times, is ultimately accomplished in a single step.
We apply this to preparing
the ground state of the transverse field Ising model (TFIM) on  $5$ qubits.

The paper is organized as follows. We define an object called a ``block'' and prove the existence of a constructive compression method for circuits made only of blocks in 
\cref{sec:compression_theorem}. In \cref{sec:model_specific},
we propose specific mappings between the circuit elements
of Trotter-decomposed time evolution operators and blocks for the Kitaev chain, the TFIM, the transverse field XY (TFXY) model, as well as a nearest neighbor free fermion model.
We further prove that the three properties of blocks hold for each,
thus showing that a compressed circuit exists for these models.
We apply our method to ground state construction of the TFIM with open boundary conditions in 
\Cref{sec:specific_example}, where we show that compressed Trotter circuits are superior to traditional Trotter circuits due to their ability to simulate as slow time evolution as desired, which is significant for adiabatic evolution.
We finish by discussing the potential implications of
our results in a broad sense in \cref{sec:discussion}.

\section{Blocks and the Compression Theorem}
\label{sec:compression_theorem}

\begin{figure*}[t]
    \centering
    \includegraphics[width = 1.35\columnwidth]{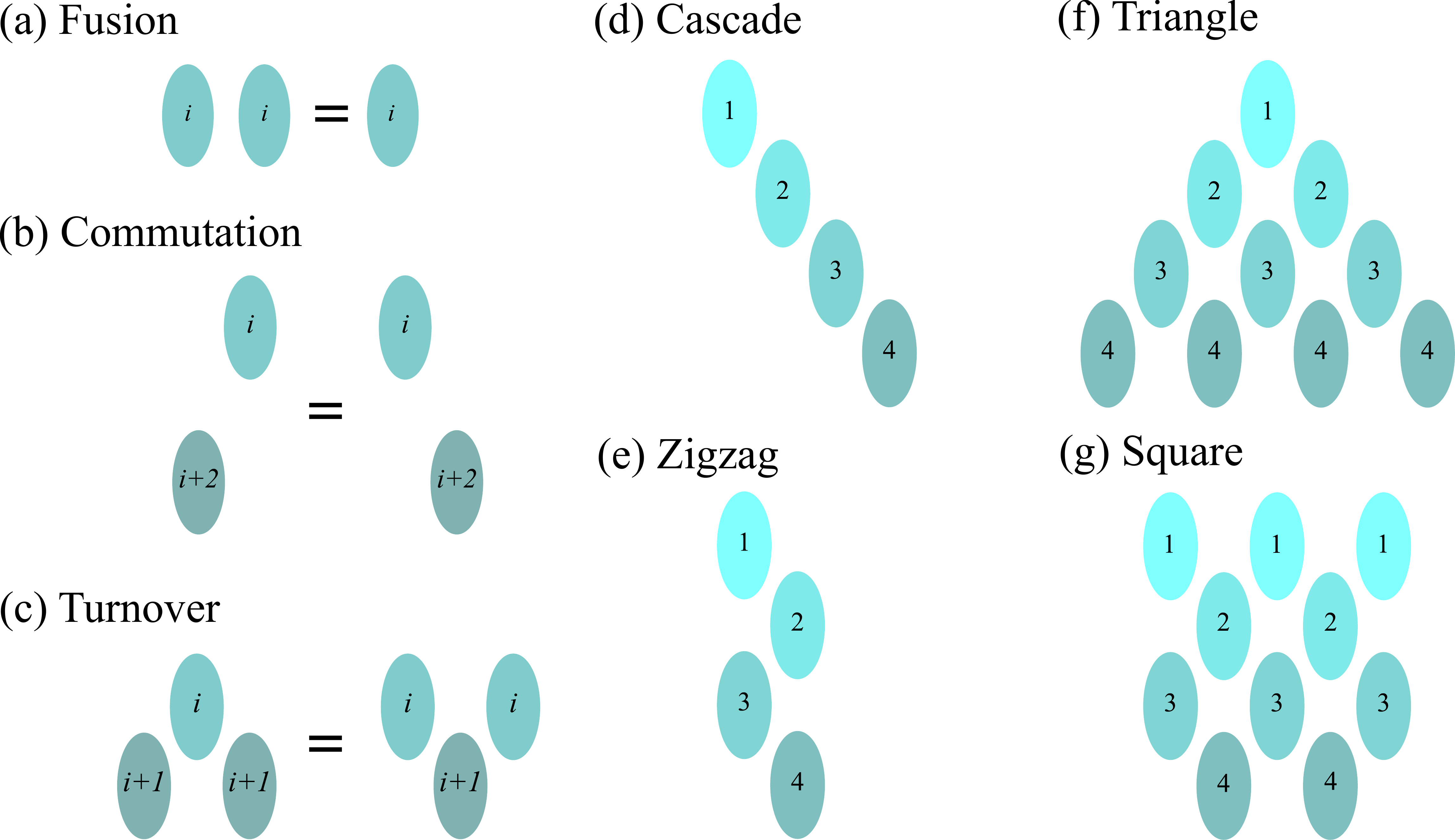}
    \label{fig:blocks}
    \caption{Panels \textbf{(a)}, \textbf{(b)}, \textbf{(c)} illustrate the defining properties of a block given in \cref{block}. Panels \textbf{(d)}, \textbf{(e)}, \textbf{(f)}, \textbf{(g)} } illustrate the block structures cascade, zigzag and triangle defined in \cref{block,cascade,triangle}, and square defined in \cref{thm:triangle_to_square}.
\end{figure*}

In this section, we will define and prove generic mathematical tools necessary to show that certain models have a Trotter expansion that can be compressed down to a fixed depth circuit. To compress the Trotter expansion, we map certain elements in the Trotter step into a mathematical structure called a ``block''. These block structures are model dependent and are required to have certain simplification rules given in \cref{block}. We follow with lemmas and theorems that use these simplification rules to compress a Trotter circuit that is fully expressed via blocks into a fixed depth circuit.

Let $B_i(\vec{\theta})$ be functions of $\vec{\theta}$, where the index can
be any positive integer $i=1,2,3,\ldots$ and $\vec{\theta}$ are multiple real
numbers.
Each $B_i(\vec{\theta})$ will be mapped to quantum circuit elements in a model dependent way such that the parameters $\vec{\theta}$ will be the parameters for the gates, and the index will separate them in terms of which qubits they act on. The number of parameters and index to qubit(s) mapping vary from model to model, as it will be seen in \cref{sec:model_specific}. For example, we will have $B_i(\theta) \equiv \exp(-i\theta X_i X_{i+1})$ for odd $i$ and $B_i(\theta) \equiv \exp(-i\theta Y_i Y_{i+1})$ for even $i$ as the mapping for an open one-dimensional (1D) Kitaev chain in \cref{1D_Kitaev_Chain}, and show that
this mapping satisfies the three defining properties listed below.

\begin{definition}[Block]
\label{block}
Define a ``block'' $B_i(\vec{\theta})$ as a structure that satisfies the following three properties:
\begin{enumerate}
  \item \textbf{the fusion property:} for any set of parameters $\vec{\alpha}$ and $\vec{\beta}$, there exist $\vec{a}$ such that
        \begin{equation}
          B_i(\vec{\alpha}) \, B_i(\vec{\beta}) = B_i(\vec{a}),
        \end{equation}
  \item \textbf{the commutation property:} for any set of parameters $\vec{\alpha}$ and $\vec{\beta}$
        \begin{equation}
          B_i(\vec{\alpha}) \, B_j(\vec{\beta}) = B_j(\vec{\beta}) \, B_i(\vec{\alpha}), \qquad |i-j|>1,
        \end{equation}
  \item \textbf{the turnover property:} for any set of parameters $\vec{\alpha}$, $\vec{\beta}$ and $\vec{\gamma}$ there exist $\vec{a}$, $\vec{b}$ and $\vec{c}$ such that 
        \begin{equation}
          B_i(\vec{\alpha}) \, B_{i+1}(\vec{\beta}) \, B_i(\vec{\gamma}) = B_{i+1}(\vec{a}) \, B_i(\vec{b}) \, B_{i+1}(\vec{c}).
        \end{equation}
\end{enumerate}
\end{definition}

Henceforth, we drop the parameters $\vec{\theta}$ when referring to a block and simply write $B_i$ for convenience. The reader should assume that each block in an expression may have different parameters unless it is stated otherwise. 

Using \cref{block}, we now define three important ``block'' structures, which will be used to obtain and prove the fixed depth circuits.
We start with a cascade of blocks.

\begin{definition}[Cascade]
\label{cascade}
Define a ``cascade'' of blocks to be
\begin{equation}
  C_{i,j} := \prod_{k=i}^j B_k = B_i \, B_{i+1} \,\cdots\, B_j,
  \label{eq:cascade}
\end{equation}
where $B_i$ are blocks as defined in \cref{block} and $i<j$.
\end{definition}

\begin{lemma}
\label{lem:cascblock}
Let $B_m$ be a block as defined in \cref{block} and $C_{j,n}$ a cascade of blocks
as defined in \cref{cascade} with $j \leq m < n$.
Then the transformation
\begin{equation}
C_{j,n} \, B_m = B_{m+1} \, C_{j,n},
\end{equation}
can be performed by one turnover operation. 
\end{lemma}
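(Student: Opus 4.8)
The plan is to prove the identity by a direct rewriting of the cascade: use the commutation property to isolate a single $B_m\,B_{m+1}\,B_m$ triple, apply the turnover property exactly once to that triple, and then use commutation again to push the leftover factor back so that a cascade reappears. First I would write the left-hand side out explicitly as
\begin{equation}
  C_{j,n}\,B_m = B_j\,B_{j+1}\cdots B_m\,B_{m+1}\cdots B_n\,B_m,
\end{equation}
where the final factor is the extra $B_m$ being multiplied on the right. The hypotheses $j\le m<n$ guarantee that both $B_m$ and $B_{m+1}$ already occur inside the cascade, which is precisely what makes the turnover applicable here.

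The key step is to migrate the trailing $B_m$ leftward. Every factor $B_k$ with $k\ge m+2$ satisfies $|k-m|>1$, so by the commutation property of \cref{block} the trailing $B_m$ passes freely through $B_n,B_{n-1},\ldots,B_{m+2}$ until it sits immediately to the right of $B_{m+1}$, leaving the triple $B_m\,B_{m+1}\,B_m$ flanked by the untouched factors. (When $m=n-1$ this sweep is vacuous, since no factor of index $\ge m+2$ is present and the triple is already adjacent.) I would then apply the turnover property once, replacing $B_m\,B_{m+1}\,B_m$ by $B_{m+1}\,B_m\,B_{m+1}$; this is the single turnover promised in the statement.

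Finally, the newly created leading $B_{m+1}$ now sits just to the right of $B_{m-1}$, and since every factor $B_k$ with $k\le m-1$ obeys $|k-(m+1)|>1$, commutation lets it slide to the front past $B_{m-1},\ldots,B_j$ (vacuously when $m=j$). What remains to its right is exactly $B_j\,B_{j+1}\cdots B_n = C_{j,n}$, yielding $B_{m+1}\,C_{j,n}$ as claimed, with only commutations and the one turnover used.

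The main thing to watch is the bookkeeping of index gaps: at each swap I must confirm that the two factors differ in index by more than one, which is exactly where the hypotheses $j\le m<n$ and the consecutive-index structure of the cascade enter, along with the boundary cases $m=j$ and $m=n-1$. I expect this index accounting, rather than any deep algebra, to be the only real obstacle, since the lone nontrivial algebraic move is the single turnover.
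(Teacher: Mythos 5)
Your proof is correct and follows essentially the same route as the paper's: both write out the cascade, commute the trailing $B_m$ leftward past $B_n,\ldots,B_{m+2}$ to form the triple $B_m\,B_{m+1}\,B_m$, apply the turnover once, and then commute the resulting leading $B_{m+1}$ out to the front so the remaining factors reassemble into $C_{j,n}$. Your explicit treatment of the boundary cases $m=j$ and $m=n-1$ is a minor point of added care that the paper leaves implicit in its product notation.
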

\begin{proof}
Using \cref{cascade}, we obtain
\begin{align*}
  C_{j,n} B_m &= \left( \prod_{k=j}^n B_k \right) B_m, \\
              &= \left( \prod_{k=j}^{m+1} B_k \right) B_m
                 \left( \prod_{k=m+2}^{n} B_k \right), \\
              &= \left( \prod_{k=j}^{m-1} B_k \right) B_m \, B_{m+1} \, B_m
                 \left( \prod_{k=m+2}^{n} B_k \right).
\end{align*}
Next, applying the turnover property to $B_m \, B_{m+1} \, B_m$ and using the commutation property, yields
\begin{align*}
  C_{j,n} B_m &= \left( \prod_{k=j}^{m-1} B_k \right) B_{m+1} \, B_m \, B_{m+1}
                 \left( \prod_{k=m+2}^{n} B_k \right), \\
              &= B_{m+1} \left( \prod_{k=j}^{m-1} B_k \right)
                         \left( \prod_{k=m}^{n} B_k \right), \\
              &= B_{m+1} \left( \prod_{k=j}^{n} B_k \right), \\
              &= B_{m+1} \, C_{j,n},
\end{align*}
which completes the proof.
\end{proof}

Next, we define a triangle of blocks, comprised of multiple cascades,
and prove that an extra block can be fully compressed into a triangle.

\begin{definition}[Triangle]
\label{triangle}
Define a ``triangle'' of blocks as
\begin{equation}
  T_i := \prod_{k=0}^{i-1} C_{i-k,i} = C_{i,i} \, C_{i-1,i} \,\cdots\, C_{1,i},
  \label{eq:triangle}
\end{equation}
where $C_{i,j}$ are cascades of blocks as defined in \cref{cascade}.
\end{definition}

\begin{figure*}[t]
    \centering
    \includegraphics[width = 2\columnwidth]{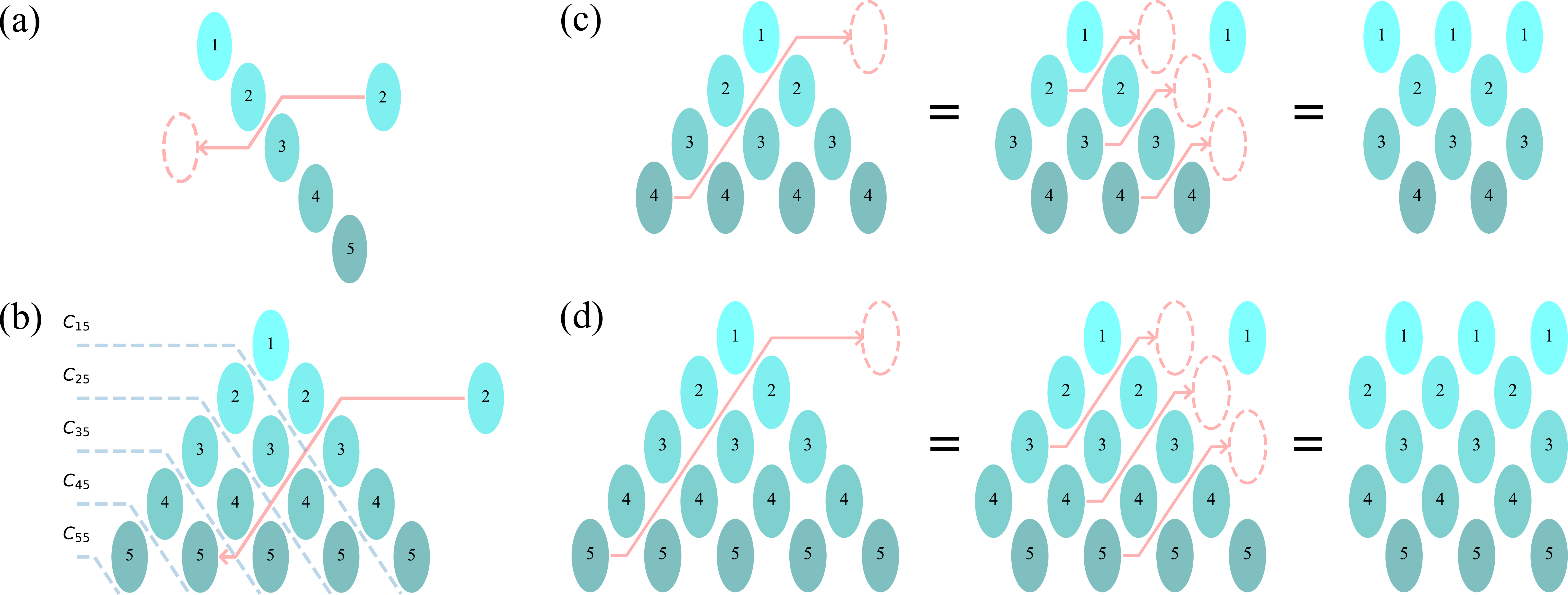}
    \label{fig:triangletosquare}
    \caption{Panel \textbf{(a)} illustrates \cref{lem:cascblock}. As it can be seen here, this lemma can be interpreted as a block shifting one position down while passing through a cascade from right to left. Panel \textbf{(b)} includes the diagrammatic representation of \cref{thm:blocks}. Block $B_2$ on the right side of a triangle with height $5$ i.e. $T_5$ finds its way all the way down to $5$th level passing through cascades and finally merges with the block that is at the end of the path, and results with a triangle with height $5$ made by blocks with different parameters at the end. Panels \textbf{(c)} and \textbf{(d)} illustrate the transformation described in \cref{thm:triangle_to_square} respectively for triangles with even and odd height. As described, one starts by pushing the leftmost cascade all the way to the right, and continue by pushing every other cascade until reaching to the longest cascade $C_{1,n}$. This results in the square circuits shown in the rightmost part of both panels.}
\end{figure*}

\begin{theorem}
\label{thm:blocks}
Let $B_m$ be a block as defined in \cref{block} and $T_n$ a triangle of blocks
as defined in \cref{triangle} with $m \leq n$.
Then $B_m$ can be fully merged into $T_n$, i.e.,
\begin{equation}
T_n \, B_m = T_n,
\end{equation}
via $n-m$ turnover operations and $1$ fusion operation.
\end{theorem}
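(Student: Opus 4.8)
The plan is to peel the cascades of $T_n$ off one at a time from the right and shuttle the extra block $B_m$ leftward through them using \cref{lem:cascblock}, raising its index by one at each pass, until the index reaches $n$, at which point the block lines up with a matching factor inside a cascade and is absorbed by a single fusion.

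Concretely, writing $T_n = C_{n,n} \, C_{n-1,n} \cdots C_{2,n} \, C_{1,n}$, the block $B_m$ sits immediately to the right of the longest cascade $C_{1,n}$. Since $m \le n$, I would first treat the generic case $m < n$. Applying \cref{lem:cascblock} with $j=1$ gives $C_{1,n} B_m = B_{m+1} C_{1,n}$, moving the block to the left of $C_{1,n}$ while incrementing its index. The block now abuts $C_{2,n}$, and I apply the lemma again, and so on. The crucial bookkeeping is the precondition of \cref{lem:cascblock}: when the migrating block reaches cascade $C_{k,n}$ its index is $m+k-1$, and the hypothesis $k \le m+k-1 < n$ reduces to $m \ge 1$ together with $k \le n-m$. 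Thus the block passes successfully through exactly the cascades $C_{1,n}, C_{2,n}, \ldots, C_{n-m,n}$, one turnover per step, and after these $n-m$ steps its index has climbed to $m + (n-m) = n$, i.e.\ it has become $B_n$.

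At this stage $B_n$ has stalled — the lemma can no longer fire since its index equals $n$ — and it sits immediately to the right of the cascade $C_{n-m+1,n} = B_{n-m+1}\cdots B_{n-1}B_n$, whose final factor is itself $B_n$. I would then invoke the fusion property of \cref{block} on the adjacent pair $B_n B_n = B_n$, which collapses the two blocks into one and restores $C_{n-m+1,n}$ exactly (up to parameters). All remaining cascades $C_{n,n},\ldots,C_{n-m+1,n}$ on the left and $C_{n-m,n},\ldots,C_{1,n}$ on the right are untouched, so the product is again a triangle $T_n$, completing the proof. Tallying the cost gives $n-m$ turnover operations and a single fusion, matching the statement; the degenerate case $m=n$ needs no turnovers and is handled by the lone fusion of $B_n$ into the tail of $C_{1,n}$.

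The only real subtlety — and the step I would be most careful about — is the index arithmetic that certifies both that the lemma applies at every intermediate cascade and that the migrating block arrives as $B_n$ precisely when it meets the cascade $C_{n-m+1,n}$ whose last entry is $B_n$; getting either the stopping index or the identity of the absorbing cascade off by one would break the fusion. A small worked instance, such as $T_5 B_2$ tracing $B_2 \to B_3 \to B_4 \to B_5$ through $C_{1,5}, C_{2,5}, C_{3,5}$ before fusing into $C_{4,5}$, is what I would use to sanity-check the indexing before writing the general argument.
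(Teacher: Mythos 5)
Your proposal is correct and follows essentially the same route as the paper's proof: repeated application of \cref{lem:cascblock} to walk the block through the cascades $C_{1,n},\ldots,C_{n-m,n}$ (one turnover each, raising the index to $n$), followed by a single fusion $B_n B_n = B_n$ that absorbs it into the tail of $C_{n-m+1,n}$. Your explicit verification of the lemma's precondition at each step and of the degenerate case $m=n$ is slightly more careful bookkeeping than the paper provides, but the argument is the same.
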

\begin{proof}
Using \cref{triangle}, we obtain
\begin{equation*}
  T_n \, B_m = \left( \prod_{k=0}^{n-1} C_{n-k,n} \right) B_m
             = C_{n,n} \, C_{n-1,n} \,\cdots\, C_{1,n} \, B_m.
\end{equation*}
Applying \cref{lem:cascblock} $n-m$ times, yields
\begin{align*}
  T_n \, B_m &= C_{n,n} \, C_{n-1,n} \,\cdots\, C_{1,n} \, B_m, \\
             &= C_{n,n} \, C_{n-1,n} \,\cdots\, C_{2,n} \, B_{m+1} \, C_{1,n},\\
             &\ \, \vdots \\
             &= C_{n,n} \, C_{n-1,n} \,\cdots\, C_{n-m+1,n} \, B_n \, C_{n-m,n}
                \,\cdots\, C_{1,n},
\end{align*}
where each time we swap a block with a cascade, we perform $1$ turnover.
Hence, we have done $n-m$ turnover operations in order to move the extra
block $B_m$ to $B_n$.
Next, we can merge $B_n$ into the triangle
$C_{n,n} \, C_{n-1,n} \,\cdots\, C_{n-m+1,n}$, i.e.,
\begin{align*}
  C_{n-m+1,n} \, B_n &= \left( \prod_{k=n-m+1}^{n} B_k \right) B_n, \\
                     &= \left( \prod_{k=n-m+1}^{n-1} B_k \right) B_n \, B_n,\\
                     &= \left( \prod_{k=n-m+1}^{n-1} B_k \right) B_n
                      = C_{n-m+1,n},
\end{align*}
where we applied the fusion property once. Finally,
\begin{align*}
  T_n \, B_m &= C_{n,n} \, C_{n-1,n} \,\cdots\, C_{n-m+1,n} \, C_{n-m,n}
                \,\cdots\, C_{1,n}, \\
             &= \prod_{k=0}^{n-1} C_{n-k,n}
              = T_n,
\end{align*}
via $n-m$ turnover and 1 fusion operation.
\end{proof}

\Cref{thm:blocks} shows that a triangle and any finite set of blocks can be merged into a triangle. Here we define another block structure, called zigzag of blocks, which corresponds to a
time step of a Trotter decomposition of the Hamiltonian evolution after mapping to blocks. We use \cref{thm:blocks} to explicitly show that a zigzag can efficiently be compressed into a triangle of
appropriate size. We also show that a triangle can be merged with another triangle efficiently, using the fact that a triangle is a finite set of blocks.

\begin{definition}[Zigzag]
\label{zigzag}
Define a ``zigzag'' of blocks as
\begin{equation}
  L_{i,j} := \Bigg( \prod_{\substack{k=i \\ \mathrm{odd}}}^j  B_k \Bigg)
            \Bigg( \prod_{\substack{k=i \\ \mathrm{even}}}^j B_k \Bigg),
  \label{eq:zigzag}
\end{equation}
where $B_k$ are blocks as defined in \cref{block} and $i<j$; note all even numbered blocks are to the right of odd numbered blocks.
\end{definition}

\begin{corollary}
\label{trianglezigzag}
Let $C_{1,n}$, $T_n$, and $L_{1,n}$ be a cascade, triangle, and zigzag of blocks as defined in \cref{cascade,triangle,zigzag}, respectively.
Then a triangle with height $n$ and a cascade (zigzag) with height $n$ can be merged into a triangle with height $n$, i.e.
\begin{align}
  T_n \, C_{1,n} &= T_n, &
  T_n \, L_{1,n} &= T_n,
\end{align}
with $n(n-1)/2$ turnover and $n$ fusion operations.
\end{corollary}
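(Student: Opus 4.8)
The plan is to treat both identities as repeated invocations of \cref{thm:blocks}, exploiting the fact that a cascade and a zigzag are each simply a product of the $n$ blocks $B_1, B_2, \ldots, B_n$ written in a particular order, and that \cref{thm:blocks} already tells us how to absorb a single such block into a height-$n$ triangle.

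First I would handle the cascade. Writing $T_n \, C_{1,n} = T_n \, B_1 \, B_2 \cdots B_n$, the leftmost block $B_1$ sits immediately to the right of the triangle, so \cref{thm:blocks} applies directly and gives $T_n \, B_1 = T_n$ at the cost of $n-1$ turnovers and $1$ fusion. The essential point is that the right-hand side is again a triangle of height $n$ (with updated parameters), so after relabeling we have $T_n \, B_2 \cdots B_n$, with $B_2$ now adjacent to the triangle. Iterating, the $m$-th absorption removes $B_m$ using $n-m$ turnovers and $1$ fusion, and summing over $m = 1, \ldots, n$ gives a total of $\sum_{m=1}^{n}(n-m) = n(n-1)/2$ turnovers and $n$ fusions, matching the claim.

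For the zigzag I would observe that $L_{1,n}$ contains exactly the same blocks $B_1, \ldots, B_n$ as the cascade, merely reordered so that all odd indices precede all even indices. The absorption argument is then identical: absorb the blocks one at a time from left to right, each step being an instance of \cref{thm:blocks} that leaves a height-$n$ triangle behind. Since the cost of absorbing $B_m$ depends only on its index and not on its position in the product, the total is once again $\sum_{m=1}^{n}(n-m) = n(n-1)/2$ turnovers and $n$ fusions.

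The step I would flag as the main thing to get right—rather than a genuine obstacle—is verifying that each successive block is truly adjacent to the triangle at the moment \cref{thm:blocks} is invoked, so that nothing obstructs the absorption. This is guaranteed precisely because \cref{thm:blocks} returns a triangle (and not a triangle multiplied by leftover blocks): the remaining factors to the right are untouched, so the next block slides into adjacency automatically. One should also confirm that the index bound $m \le n$ required by \cref{thm:blocks} holds for every block appearing in $C_{1,n}$ and $L_{1,n}$, which it does, since both structures involve only the indices $1$ through $n$.
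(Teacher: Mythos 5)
Your proposal is correct and takes essentially the same route as the paper: both proofs absorb the $n$ blocks of the cascade or zigzag one at a time via \cref{thm:blocks}, with the block of index $m$ costing $n-m$ turnovers and one fusion, so that the totals $\sum_{m=1}^{n}(n-m)=n(n-1)/2$ turnovers and $n$ fusions follow by summation. Your extra care in verifying that each absorption returns a genuine height-$n$ triangle (so the next block becomes adjacent) and that every index satisfies $m\le n$ merely makes explicit what the paper leaves implicit.
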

\begin{proof}
In both cases, we are merging one block from every height $1,2,3,..,n$, therefore we use the result of \cref{thm:blocks} for each of these $n$ blocks. Each block will require $1$ fusion, therefore we need to apply $n$ fusion operations. The number of turnover operations is
\begin{align*}
    \sum_{k=1}^n (n-k) = \sum_{k=1}^{n-1} k = \frac{n(n-1)}{2},
\end{align*}
which concludes the proof.
\end{proof}

\begin{corollary}
\label{triangletriangle}
Two triangles with height $n$ can be merged into a triangle with height $n$, i.e.
\begin{equation}
  T_n \, T_n = T_n,
\end{equation}
via $n(n^2-1)/6 = O(n^3)$ turnover and $n(n-1)/2 = O(n^2)$ fusion operations.
\end{corollary}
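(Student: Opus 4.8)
The plan is to exploit the fact that, by \cref{triangle} and \cref{cascade}, the second factor is nothing more than an explicit product of blocks, $T_n = C_{n,n}\,C_{n-1,n}\cdots C_{1,n}$, together with the observation that \cref{thm:blocks} already tells us how to absorb a single block sitting immediately to the right of a triangle. So I would peel the blocks off the left end of the second factor one at a time and absorb each into the first triangle via \cref{thm:blocks}. Concretely, write $T_n\,T_n = T_n\,(C_{n,n}\,C_{n-1,n}\cdots C_{1,n})$; the leftmost block of the second factor is adjacent to the first triangle, so \cref{thm:blocks} applies and returns a triangle of height $n$. Because the output is again a height-$n$ triangle, the next block to its right is now adjacent to a triangle and the argument can be iterated until every block of the second factor has been consumed, leaving a single $T_n$.

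For the operation count I would simply tally the cost that \cref{thm:blocks} assigns to each absorbed block: a level-$m$ block $B_m$ is absorbed at the cost of $n-m$ turnovers and one fusion. Reading off the cascade decomposition, $B_m$ occurs exactly once in each cascade $C_{i,n}$ with $i\le m$, hence $m$ times in total across the second triangle. Summing, the number of turnovers is
\begin{equation*}
  \sum_{m=1}^{n} m\,(n-m) = n\sum_{m=1}^n m - \sum_{m=1}^n m^2 = \frac{n(n^2-1)}{6},
\end{equation*}
in agreement with the stated cubic cost, while the number of fusions is one per absorbed block, namely $\sum_{m=1}^n m = \frac{n(n+1)}{2}$, i.e. the total number of blocks contained in a single height-$n$ triangle. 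A quick invariant makes the fusion count unavoidable: a turnover never changes the number of blocks in an expression, whereas each fusion removes exactly one, so collapsing the $n(n+1)$ blocks of $T_n\,T_n$ down to the $\tfrac{n(n+1)}{2}$ blocks of a single $T_n$ must consume precisely $\tfrac{n(n+1)}{2}$ fusions.

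The only genuine obstacle is the bookkeeping that licenses the iteration. I must check that after each application of \cref{thm:blocks} the first factor is once again a bona fide height-$n$ triangle (with, in general, altered parameters), so that the next block on its right is again adjacent to a triangle and \cref{thm:blocks} may be reapplied. This is exactly the content of \cref{thm:blocks}, whose conclusion $T_n\,B_m = T_n$ preserves the triangle, so once a fixed left-to-right processing order is pinned down the induction is routine and the two totals follow by the summations above.
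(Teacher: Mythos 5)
Your approach is essentially the paper's: absorb the blocks of the right-hand triangle one at a time into the left-hand triangle via \cref{thm:blocks} (the paper merely organizes the absorption cascade by cascade, invoking \cref{trianglezigzag} for each $C_{k,n}$, $k=1,\dots,n$), and your turnover tally $\sum_{m=1}^n m(n-m)=n(n^2-1)/6$ agrees with the paper's $\sum_{k=1}^n k(k-1)/2$. The substantive issue is the fusion count, and here you should have flagged --- rather than silently passed over --- the fact that your answer contradicts the statement you set out to prove. By \cref{triangle,cascade}, $T_n$ consists of cascades of sizes $1,2,\dots,n$ and hence contains $n(n+1)/2$ blocks, so one fusion per absorbed block gives $n(n+1)/2$ fusions, exactly as you compute; your invariant argument (a turnover preserves the number of blocks, a fusion removes exactly one) even shows that no procedure built from these operations can use fewer, since the block count must drop from $n(n+1)$ to $n(n+1)/2$. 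The corollary's stated count $n(n-1)/2$ is therefore unattainable, and the paper's own proof is internally inconsistent on this point: it asserts that ``there are $n(n-1)/2$ blocks in $T_n$,'' which contradicts \cref{triangle}, while its turnover sum tacitly treats the second triangle as containing cascades of heights $1$ through $n$, i.e., $n(n+1)/2$ blocks, each costing one fusion. So your arithmetic is the correct one; the discrepancy is an off-by-$n$ error in the paper, harmless for the asymptotic claim since $n(n+1)/2=O(n^2)$. As mathematics your write-up is sound, but as a proof of the corollary as literally stated it is incomplete: you must either note that the constant in the statement should read $n(n+1)/2$, or observe that the claimed $n(n-1)/2$ is impossible --- which your own invariant establishes.
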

\begin{proof}
There are $n(n-1)/2$ blocks in $T_n$. As a result of \cref{thm:blocks} this merging will require $n(n-1)/2$ fusion operations. Using the definition of $T_n$, we see that we are effectively merging the triangle with cascades with heights $1$ to $n$. Using the result of \cref{trianglezigzag}, we get the number of turnover operations required for this merger to be
\begin{align}
    \sum_{k=1}^n \frac{k(k-1)}{2} = \frac{n(n^2-1)}{6},
\end{align}
which concludes the proof.
\end{proof}

Using \cref{trianglezigzag,triangletriangle} as induction steps, we provide the two following compression techniques for Trotter decompositions involving blocks only:

\begin{corollary}[Compression for time-dependent Hamiltonians]
\label{timedependentcompression}
    A product of $r$ zigzags with height $n$ for $r>1$ will lead to a triangle with height $n$
    \begin{align}
        \prod_{k=1}^r L_{1,n} = T_n,
    \end{align}
    via $O(n^2 r)$ turnover and $O(n r)$ fusion operations.
\end{corollary}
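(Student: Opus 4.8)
The plan is to prove the identity by induction on the number of zigzags $r$, with \cref{trianglezigzag} supplying the induction step. Intuitively, one fixed triangle $T_n$ acts as a reservoir that absorbs each Trotter layer in turn, so that the compressed depth saturates at the size of $T_n$ and never grows with $r$.

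For the base of the induction I would seed the process with a triangle $T_n$ all of whose blocks are trivial (identity), which is a legitimate triangle since a block may always be taken trivial---for instance $\theta=0$ in the Pauli-exponential realizations of \cref{block}. Crucially, \cref{trianglezigzag} continues to apply to such a triangle, because its proof invokes only the turnover and fusion properties of \cref{block} through \cref{thm:blocks}, and fusing with a trivial block is harmless. A single application then gives $T_n\,L_{1,n}=T_n$, i.e.\ the first zigzag already reshapes into a triangle. I expect this base case to be the main obstacle, or at least the one point demanding care: one must justify admitting a trivial triangle (equivalently, that a lone zigzag, padded with trivial blocks, can be arranged into the cascade pattern of \cref{triangle}), and this is exactly where the restriction $r>1$ enters---for $r=1$ the triangle has more slots than the single zigzag has blocks, so the output is not genuinely compressed, and for larger $n$ I expect the reshaping to require genuine turnovers rather than commutations alone.

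Granting the base case, the inductive step is immediate: assuming $\prod_{k=1}^{r-1}L_{1,n}=T_n$, I peel off the last factor and apply \cref{trianglezigzag} once more,
\begin{equation*}
  \prod_{k=1}^{r}L_{1,n}=\Bigg(\prod_{k=1}^{r-1}L_{1,n}\Bigg)L_{1,n}=T_n\,L_{1,n}=T_n ,
\end{equation*}
which closes the induction.

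It remains to count operations. Each absorption of a zigzag costs exactly $n(n-1)/2$ turnovers and $n$ fusions by \cref{trianglezigzag}, and at most $r$ such absorptions are performed (the trivial seed contributing nothing extra). Summing yields $(r)\,n(n-1)/2=O(n^2 r)$ turnover operations and $r\,n=O(n r)$ fusion operations, matching the claimed bounds.
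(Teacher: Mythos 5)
Your proof is correct and follows essentially the same route as the paper's: both arguments use identity-padded blocks to get the base case and then absorb each successive zigzag via \cref{trianglezigzag}, giving $O(n^2 r)$ turnovers and $O(nr)$ fusions. The only (immaterial) difference is that you seed with an explicit all-identity triangle and absorb all $r$ zigzags, whereas the paper reinterprets the first zigzag itself as a triangle with identity blocks in the missing slots and absorbs the remaining $r-1$.
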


\begin{proof}
A zigzag can be viewed as a triangle with identity operations in place of the missing blocks. Therefore for $r=1$, the result should be a triangle. Using \cref{trianglezigzag}, we know that adding another zigzag to the triangle each time will require $O(n^2)$ turnover and $O(n)$ fusion operations. Doing this $r$ times will require $O(n^2 r)$ turnover and $O(n r)$ fusion operations.
\end{proof}

\begin{corollary}[Compression for time-independent Hamiltonians]
\label{timeindependentcompression}
    Consider $r$ identical zigzags with height $n$. Using the fact that they are same, they can be compressed into a triangle with height $n$ 
    \begin{align}\label{eq:timeindependentcompression}
        (L_{1,n})^r = T_n
    \end{align}
    where we use power instead of the product symbol to emphasize that the parameters are the same, via $O(n^3 \log_2 r)$ turnover and $O(n^2 \log_2 r)$ fusion operations.
\end{corollary}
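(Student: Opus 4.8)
The plan is to exploit the special structure of the time-independent case---namely that all $r$ zigzags are identical---to achieve a logarithmic rather than linear dependence on $r$. The key observation is that repeated squaring lets us build up $(L_{1,n})^r$ in only $O(\log_2 r)$ merging steps rather than $r$ of them, and the engine of each merge is \Cref{triangletriangle}, which tells us that two height-$n$ triangles fuse into one height-$n$ triangle at a cost of $O(n^3)$ turnovers and $O(n^2)$ fusions.

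First I would reduce a single zigzag to a triangle: by the same argument used in the proof of \Cref{timedependentcompression}, a zigzag $L_{1,n}$ is a triangle with identities filling the missing slots, so $L_{1,n} = T_n$ at no asymptotic cost (it is already in triangular form). This gives the base case $(L_{1,n})^1 = T_n$. Next I would set up the repeated-squaring recursion. Writing $r$ in binary, I would compute the sequence $T_n^{(1)} := T_n$, and then $T_n^{(2^{k+1})}$ from $T_n^{(2^k)}$ by squaring, where each squaring is a single application of \Cref{triangletriangle}: two copies of the same height-$n$ triangle merge into one height-$n$ triangle. Since $T_n^{2} = T_n$ as an object (the height is preserved), each squaring step produces another height-$n$ triangle, and the final product $(L_{1,n})^r$ is assembled by multiplying together the $O(\log_2 r)$ triangles corresponding to the set bits of $r$, again via \Cref{triangletriangle}.

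For the cost accounting I would simply tally: there are $O(\log_2 r)$ triangle-triangle merges in total (the squarings plus the combination of the set-bit pieces), and \Cref{triangletriangle} charges $O(n^3)$ turnovers and $O(n^2)$ fusions per merge. Multiplying gives the claimed $O(n^3 \log_2 r)$ turnover and $O(n^2 \log_2 r)$ fusion operations. The main point requiring care---and the only place where the ``identical'' hypothesis is genuinely used---is the validity of the repeated-squaring scheme itself: I must check that the merged triangle at each stage really equals the corresponding power of $L_{1,n}$, which follows because the underlying group multiplication of circuit blocks is associative, so grouping the $r$ identical factors by binary exponentiation yields the same unitary as the naive left-to-right product. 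I expect the main (if modest) obstacle to be stating the squaring recursion cleanly enough that the reader sees why only $O(\log_2 r)$ merges suffice, rather than the cost bookkeeping, which is routine once the recursion is in place.
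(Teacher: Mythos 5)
Your proposal is correct and follows essentially the same route as the paper: treat each zigzag (padded with identities) as a triangle, then use repeated squaring with \cref{triangletriangle} as the merge engine, so that only $O(\log_2 r)$ triangle--triangle merges, each costing $O(n^3)$ turnovers and $O(n^2)$ fusions, are required. The only difference is presentational: the paper halves the exponent top-down via $(T_n)^r = ((T_n)^2)^{r/2}$, whereas you build the powers $T_n^{(2^k)}$ bottom-up and combine those matching the set bits of $r$ --- the same square-and-multiply idea, stated in a way that handles non-power-of-two $r$ a bit more explicitly.
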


\begin{proof}
Given that zigzags (with added identity gates) are triangles , the left hand side of \eqref{eq:timeindependentcompression} can also be written as $(T_n)^r$. We can replace it with $((T_n)^2)^{r/2}$ and turn it into
$(T_n)^{r/2}$ via $O(n^2)$ fusion and $O(n^3)$ turnover operations using \cref{triangletriangle}. We need to repeat this  $O(\log_2 r)$ many steps in total, leading to this compression being performed via $O(n^3 \log_2 r)$ turnover and $O(n^2 \log_2 r)$ fusion operations.
\end{proof}

\Cref{timedependentcompression,timeindependentcompression} are the main results used for compression. For a time-independent Hamiltonian, one can choose either of the two compression algorithms.
The performance depends on the number of Trotter steps needed, and the number of qubits.
Depending on the particular problem at hand,
one or the other may be preferable;
the $O(n^2 r)$ scaling of \cref{timedependentcompression} is a
better choice for larger systems, whereas \cref{timeindependentcompression} is a better choice for a large number of Trotter steps.
For time-dependent Hamiltonians, each Trotter time step will have different coefficients, meaning the corresponding zigzags will not all be identical.  In this case, \cref{timedependentcompression} is the only choice for compression.

So far we have proved that we can compress any finite expansion given in terms of blocks into a triangle, a block structure with depth $O(n)$ where $n$ is the height of the triangle. When translated into a quantum circuit, this compression corresponds to a compression of a long circuit into a short, fixed depth circuit. For NISQ devices, it is crucial to have short depth circuits due to the fact that the noise increases for longer circuit depths. In the rest of this section, we will show that the depth of the triangle can be reduced in half, leading to better quantum circuits for NISQ devices.

\begin{lemma}\label{lem:cascshift}
    For $i<a \leq b \leq j$,
    \begin{align}
        C_{a,b} \: C_{i,j} = C_{i,j} \: C_{a-1,b-1}.
    \end{align}
\end{lemma}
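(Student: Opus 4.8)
The plan is to reduce the statement to a single-block shift and then iterate. First I would read \cref{lem:cascblock} backwards to obtain the engine of the whole argument: since $C_{i,j}\,B_{p-1} = B_p\,C_{i,j}$ whenever $i \le p-1 < j$, that is, whenever $i < p \le j$, a single block passing through a cascade from left to right simply decrements its index,
\begin{equation*}
  B_p \, C_{i,j} = C_{i,j} \, B_{p-1}, \qquad i < p \le j.
\end{equation*}
The content of \cref{lem:cascshift} is that this happens to every block of the cascade $C_{a,b}$ in turn, collectively sending $C_{a,b}$ to $C_{a-1,b-1}$, and the hypotheses $i < a \le b \le j$ are precisely what guarantee that each index encountered stays in the admissible range $(i,j]$.

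Next I would prove the claim by induction on the number of blocks in $C_{a,b}$, peeling off the rightmost one each time. For the base case $a=b$ the cascade is the single block $B_a$, and the engine above applies directly because $i < a \le j$. For the inductive step with $a < b$, I would write $C_{a,b} = C_{a,b-1}\,B_b$ and first push $B_b$ through $C_{i,j}$ using the engine, which is legitimate since $i < b \le j$, giving
\begin{equation*}
  C_{a,b}\,C_{i,j} = C_{a,b-1}\,C_{i,j}\,B_{b-1}.
\end{equation*}
The inductive hypothesis applied to $C_{a,b-1}\,C_{i,j}$, whose indices satisfy $i < a \le b-1 \le j$, replaces it by $C_{i,j}\,C_{a-1,b-2}$, and finally I would reabsorb the trailing block via the definition of a cascade in \cref{cascade}, namely $C_{a-1,b-2}\,B_{b-1} = C_{a-1,b-1}$, to arrive at $C_{i,j}\,C_{a-1,b-1}$, as desired.

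I expect the only real care needed, the \emph{main obstacle} such as it is, to be the bookkeeping of index ranges: one must verify at each push that the block index $p$ obeys $i < p \le j$ so that the engine is applicable. The left inequality is where the strict hypothesis $i < a$ is used, since every block of $C_{a,b}$ and of its shifted descendants has index strictly greater than $i$, while the right inequality relies on $b \le j$; the nesting $a \le b$ is what keeps the intermediate cascade $C_{a,b-1}$ well formed throughout the induction. No turnover or fusion identities beyond those already packaged inside \cref{lem:cascblock} are required, so the argument is purely a careful iteration of that lemma.
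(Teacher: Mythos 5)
Your proof is correct and takes essentially the same route as the paper: the paper's (far terser) proof likewise passes each block of $C_{a,b}$ through $C_{i,j}$ from left to right, decrementing its index by one, which is precisely \cref{lem:cascblock} read in reverse. Your induction on the rightmost block merely formalizes this, and your index-range bookkeeping confirms that the hypotheses $i < a \le b \le j$ are exactly what the paper's one-line argument implicitly relies on.
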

\begin{proof}
$C_{a,b} = B_a B_{a+1} \cdots B_b$ is formed by blocks with indices that are between $i$ and $j$, hence each of them can be passed from left side of $C_{i,j}$ to right side by decrementing the index by one.
\end{proof}

\begin{theorem}\label{thm:triangle_to_square}
    A triangle with height $n$ can be further optimized to a square with height $n$ which is defined as
    \begin{align}
        S_n = \prod_{\substack{k=n-1 \downarrow \\ k \mathrm{\ even}}}^2 C_{k,n} \prod_{\substack{k=n \downarrow \\ k \mathrm{\ odd}}}^1 C_{1,k}
    \end{align}
    for odd $n$ and
    \begin{align}
        S_n = \prod_{\substack{k=n-1 \downarrow \\ k \mathrm{\ odd}}}^1 C_{k,n} \prod_{\substack{k=n-1 \downarrow \\ k \mathrm{\ odd}}}^1 C_{1,k}
    \end{align}
    for even $n$, where down arrow $\downarrow$ indicates that the product is done in the decreasing order for index $k$. $S_n$ has the same number of blocks as $T_n$, however it has two-thirds the depth of $T_n$ which is important for quantum computers. 
\end{theorem}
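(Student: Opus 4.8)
The plan is to establish the identity $T_n=S_n$ using \Cref{lem:cascshift} alone, and then to settle the block-count and depth claims by direct counting. The whole argument is organized around the longest cascade $C_{1,n}$, which sits at the right end of $T_n=C_{n,n}\,C_{n-1,n}\cdots C_{1,n}$ and serves as a fixed pivot: every other cascade $C_{j,n}$ with $j\ge 2$ has its index range contained in $[1,n]$, so \Cref{lem:cascshift} applies and lets that cascade hop rightward across its neighbors, decrementing both of its indices by one at each hop.

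First I would make the single-cascade move precise. Driving the cascade $C_{j,n}$, originally the $(n-j+1)$-th factor from the left, past all $j-1$ cascades to its right converts it step by step into $C_{1,\,n-j+1}$, which then sits immediately to the right of the pivot. I would check once that the hypothesis $i<a\le b\le j$ of \Cref{lem:cascshift} is preserved throughout: the moving cascade's left index always strictly exceeds that of the stationary cascade it crosses, and its right index never exceeds $n$, so the rule is applicable at every hop.

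The heart of the proof is then a bookkeeping step. I would drive exactly the cascades occupying the \emph{odd} positions from the left---other than the pivot itself---to the right of the pivot, processing them in left-to-right order. Two observations keep this controlled: a cascade that is merely crossed is the right-hand factor in \Cref{lem:cascshift} and therefore stays in place with its indices intact, so the even-position cascades never move and the odd-position cascades not yet processed are undisturbed until their turn; and because the relocations are carried out left to right, a cascade starting at a higher position lands nearer the pivot, so the relocated cascades appear in decreasing-length order $C_{1,n-2},\ldots,C_{1,1}$ to its right. Collecting the stationary (even-position) cascades to the left of the pivot and the relocated (odd-position) ones to its right, then separating the parity of $n$, reproduces exactly the two product formulas defining $S_n$. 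Matching the index sets and the internal ordering of those two products against the definition---including the fact that for even $n$ the pivot is absorbed as the $k=1$ term of the first product---is the step that needs the most care and is the main obstacle; a short induction on the number of relocated cascades makes it rigorous.

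The two remaining claims are then immediate. Because \Cref{lem:cascshift} neither creates nor destroys blocks and leaves every cascade's length unchanged, $T_n$ and $S_n$ contain the same total number of blocks. For the depth I would count the maximal number of blocks incident on a single qubit wire, which equals the circuit depth for these nearest-neighbor structures: in $T_n$ the index $B_i$ occurs $i$ times, so the busiest wire carries $2n-1$ blocks, whereas the relocation in $S_n$ equalizes the multiplicities and drops the busiest-wire count to $n+1$, yielding the stated reduction (for instance $9$ versus $6$ at $n=5$).
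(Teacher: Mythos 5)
Your proposal is correct and takes essentially the same route as the paper's own proof: both fix the longest cascade $C_{1,n}$ as a pivot and use \cref{lem:cascshift} to drive every other cascade, processed left to right, past it, turning each moved $C_{j,n}$ into $C_{1,n-j+1}$ while the alternate cascades stay in place, which yields the two parity-dependent formulas for $S_n$. You additionally verify the lemma's hypothesis at each hop and give counting arguments for the block-count and depth claims (busiest wire $2n-1$ for $T_n$ versus $n+1$ for $S_n$), details the paper's proof leaves implicit.
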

\begin{proof}
First, consider the case for odd $n$. The definition of a triangle is:
\begin{align}
    T_n = \prod_{k=0}^{n-1} C_{n-k,n} = C_{n,n}\: C_{n-1,n} \cdots C_{1,n}
\end{align}
We will carry all $C_{\text{odd},n}$ cascades, i.e. $C_{n,n}$, $C_{n-2,n}$,\ldots $C_{5,n}$, $C_{3,n}$ to the right side of $C_{1,n}$ by using \cref{lem:cascshift} in the given order. In that order, $C_{m,n}$ will pass through ${m-1}$ cascades and become $C_{1,n-m+1}$ at the end, thereby leaving all the $C_{\text{even},n}$ cascades behind. This leads to a square formation for odd $n$ as defined in the theorem.

For even $n$, we will carry all $C_{\text{even},n}$ cascades, i.e. $C_{n,n}$, $C_{n-2,n}$,\ldots,$C_{4,n}$, $C_{2,n}$. In that order, $C_{m,n}$ will pass through ${m-1}$ cascades and become $C_{1,n-m+1}$ at the end, thereby leaving all the $C_{\text{odd},n}$ cascades behind. This leads to a square formation for even $n$ as defined in the theorem.
\end{proof}

In summary, triangles are useful to compress a Trotter expansion if the Trotter steps can be represented with blocks. After the compression, we are left with a triangle circuit. This can then be further simplified to a square (with same number of blocks as the triangle), but with half the total circuit depth. This leads to the following overall compression algorithm:

\begin{enumerate}
    \item Find a block mapping, and represent the circuit via blocks.
    \item Compress all blocks one by one into a triangle.
    \item Transform the triangle into a square.
\end{enumerate}
In the next section, we provide block mappings for Trotter decompositions of certain Hamiltonians.

\section{Time evolution of Certain Spin Models}
\label{sec:model_specific}

In this section, we show that the operations outlined in \cref{sec:compression_theorem}
are applicable to time evolution for several commonly found models in physics and
quantum information theory.
Each of these models has a Hamiltonian written as a sum of Pauli operators
for $n$ qubits (or $n$ spin-\sfrac{1}{2} particles)
\begin{align}
    \ham(t) = \sum_j H_j(t) \sigma^j,
\label{eq:hamiltonian}
\end{align}
where $H_j(t)$ are time- and site-dependent
coefficients and $\sigma^j$ are Pauli string operators, \textit{i.e.}, elements of the $n$-site Pauli group $\mathcal{P}_n=\{I,X,Y,Z\}^{\otimes n}$. 
The common feature of the models for which \cref{thm:blocks} applies is that they can all can be mapped onto free fermionic models after a Jordan-Wigner transformation \cite{chapman2020characterization} --- or, equivalently that the algebra generated by the operators in the Hamiltonian scales polynomially in the system
size \cite{kokcu2021fixed}. This suggests that there may be other models that share this property that will be amenable to this
approach.

We next consider the Trotter decomposition of the time evolution operator given in \cref{eq:solution} for those models. To generate a circuit for the full $U(t)$, we must convert each exponential $\exp(-i\Delta t \mathcal{H}(t_i))$ into a set of gates.  
The most straightforward way to map these exponentials into a set of gates is to first decompose the Hamiltonian in each exponential into components that each contain mutually commuting operators.  Trotterization can once again be employed to approximate each exponential as a product of exponentials of each component of the Hamiltonian. To the first order in $\Delta t$, this approximation is written as
\begin{align}\label{eq:approximation}
    e^{-i\Delta t \mathcal{H}(t_i)} = \prod_j e^{-i\Delta t H(t_i) \sigma^j} + O(\Delta t^2) 
\end{align}
where $j$ multiplies over the different components of the Hamiltonian.
Though higher order approximation schemes exist \cite{suzuki1976generalized}, this first order approximation is sufficient for our method. Once the time evolution is Trotter-decomposed into a series
of exponentials of single Pauli strings, a quantum circuit can then be constructed from single- and two-qubit operations \cite{pauli_circ1,pauli_circ2}. After finding a block mapping to exponentials of single Pauli strings that occur in \cref{eq:approximation} , the operations given in \cref{sec:compression_theorem} can be used to recombine the Trotter steps into a square, which can then be transformed into a fixed depth circuit for the models discussed below.

\begin{figure*}[htpb]
    \centering
    \includegraphics[width = 1.8\columnwidth]{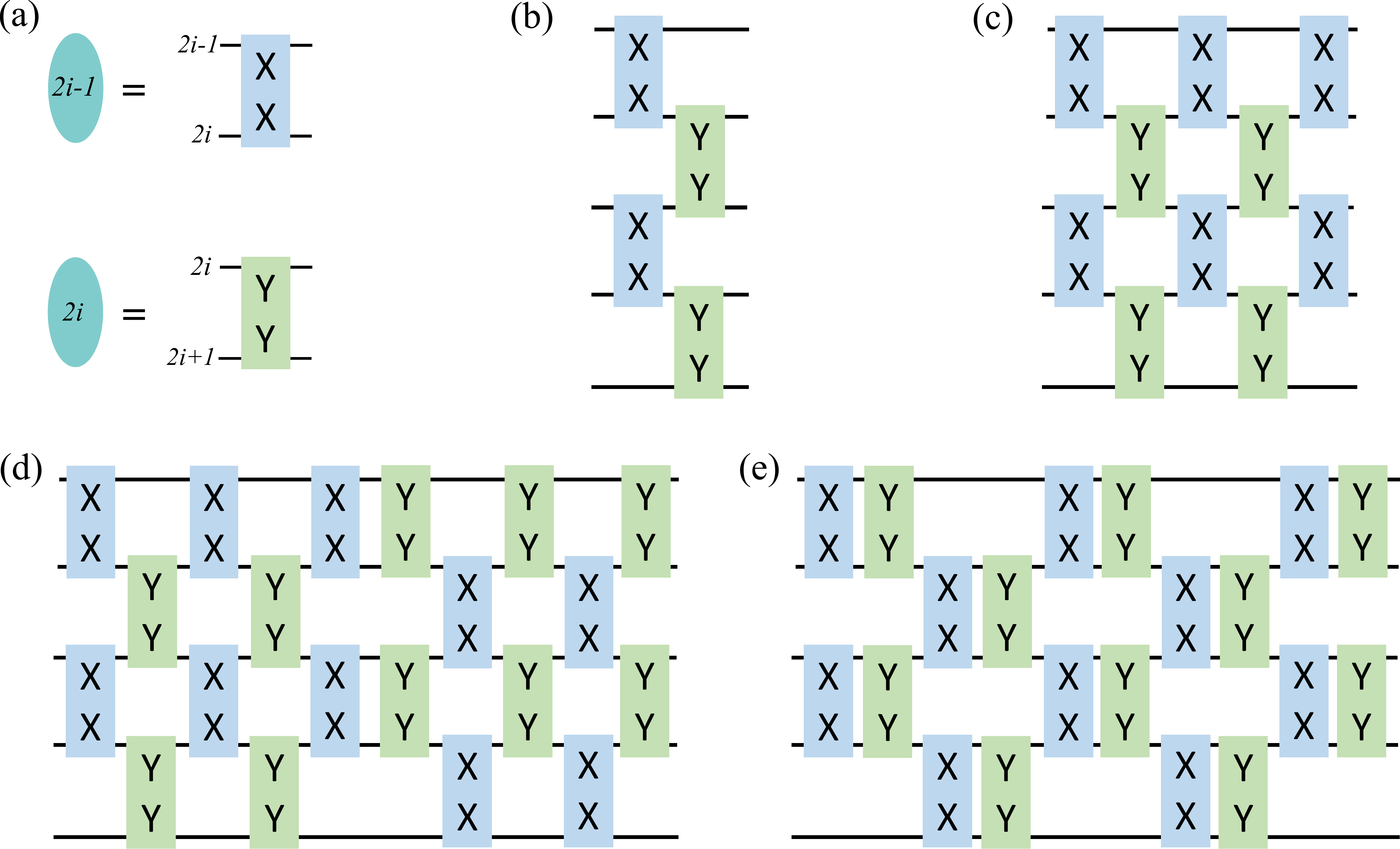}
    \caption{   \textbf{(a)} block mapping of 1D Kitaev chain, \textbf{(b)} the Trotter-decomposed operator for a single time step given in Eq.~\eqref{eq:kitaev_trotter} and \textbf{(c)} full time evolution circuit after Trotter compression. Each block $XX$ indicates $\exp(-i a XX)$ for some $a$ (and similar for $YY$). \textbf{(d)} Full time evolution of the XY model can be written as two consecutive evolutions of Kitaev chains. \textbf{(e)} These two evolutions can be combined to make $XX$ and $YY$ operations adjacent to each other to reduce the CNOT count by one half.   
    }
    \label{fig:kitaev}
\end{figure*}

\subsection{1D Kitaev Chain}\label{1D_Kitaev_Chain}

The Hamiltonian for the 1D Kitaev chain with open boundary conditions is given by
\begin{align}
    \ham = \sum_{\text{odd } i} a_i X_i X_{i+1} + \sum_{\text{even } i} b_i Y_i Y_{i+1}. 
\end{align}
The time evolution can be approximated using the Trotter decomposition with a time step $\Delta t$; for
a $5$-site chain this can be written explicitly as
\begin{align}
    e^{-i\Delta t \ham} =e^{-i\Delta t a\: X_1 X_2}e^{-i\Delta t c\: X_3 X_4}e^{-i\Delta t b\: Y_2 Y_3}e^{-i\Delta t d\: Y_4 Y_5}.
    \label{eq:kitaev_trotter}
\end{align}
Note that other choices are also available as multiplication of these exponentials in any order are all identical to each other within $O (\Delta t^2)$ error.

Consider the mapping 
$B_i \equiv B_i(\theta) = \exp(-i\theta X_i X_{i+1})$ for odd $i$ and $B_i \equiv B_i(\theta) = \exp(-i\theta Y_i Y_{i+1})$ for even $i$.
With this mapping, the Trotter step given in \cref{eq:kitaev_trotter} can be written in terms of $B_i$ matrices
\begin{align}
    e^{-i\Delta t \ham} =B_1(\Delta t \, a)\,B_3(\Delta t \, c)\,B_2(\Delta t \, b)\,B_4(\Delta t \, d).
\end{align}
Here we show that these operators are blocks and satisfy the properties given in \cref{block}.
Fusion is easily satisfied 
\begin{align}
\begin{split}
    e^{-ia X_i X_{i+1}}e^{-ib X_i X_{i+1}} &= e^{-i(a+b) X_i X_{i+1}}, \\
    e^{-ia Y_i Y_{i+1}}\:e^{-ib Y_i Y_{i+1}}\: &= e^{-i(a+b) Y_i Y_{i+1}},
\end{split}
\end{align}
and independent blocks that do not share a spin 
commute due to the fact that they act on different spins. This leaves us with the proof of the turnover property of blocks; we need to demonstrate that
\begin{align}\label{kitaevturnover}
\begin{split}
    e^{-ia X_i X_{i+1}}&e^{-ib Y_{i+1} Y_{i+2}}e^{-ic X_i X_{i+1}} \\
    &= e^{-i\alpha Y_{i+1} Y_{i+2}}e^{-i\beta X_i X_{i+1}}e^{-i\gamma Y_{i+1} Y_{i+2}}
\end{split}
\end{align}
where $i$ is an odd number, and $\alpha, \beta, \gamma$ are functions of $a,b,c$.
To show this,
we note that the exponentiated operators form an $\mathfrak{su}(2)$ algebra.
This is easily seen from their commutation relations,
\begin{align}\label{kitaevsu2}
\begin{split}
    [X_i X_{i+1},Y_{i+1} Y_{i+2}] &= 2i \: X_i Z_{i+1} Y_{i+2}, \\
    [X_i Z_{i+1} Y_{i+2},X_i X_{i+1}] &= 2i \: Y_{i+1} Y_{i+2}, \\
    [Y_{i+1} Y_{i+2},X_i Z_{i+1} Y_{i+2}] &= 2i \: X_i X_{i+1}. 
\end{split}
\end{align}
With this, we can leverage a standard Euler angle decomposition in $\mathfrak{su}(2)$.
Denoting the three directions in $\mathfrak{su}(2)$ as $\hat{x}$, $\hat{y}$, and $\hat{z}$,
any element in $\mathfrak{su}(2)$ can be written as a rotation around $\hat{x}$ followed by a rotation around $\hat{z}$ followed by another rotation around $\hat{x}$. 
In Eq.~\eqref{kitaevturnover}, the directions are $\hat{x}\equiv X_i X_{i+1}$, $\hat{z} \equiv Y_{i+1} Y_{i+2}$ on the left hand side, and vice versa on the right.
Therefore for any $a,b,c \in \mathbb{R}$, there exist $\alpha,\beta,\gamma \in \mathbb{R}$ and vice versa such that Eq.~\eqref{kitaevturnover} is satisfied. 

The coefficients can be calculated by mapping $2\times 2$ Pauli matrices to the exponents $X_i X_{i+1}$, $Y_{i+1} Y_{i+2}$ due to the isomorphism between the algebra generated by the exponents and $\mathfrak{su}(2)$. Mapping $X$ to $X_i X_{i+1}$ and $Z$ to $Y_{i+1} Y_{i+2}$, \cref{kitaevturnover} becomes
\begin{align}
\begin{split}
&\begin{pmatrix}
  \cos a & -i \sin a\\ 
  -i \sin a & \cos a
\end{pmatrix}
\begin{pmatrix}
  e^{-i b} & 0\\ 
  0 & e^{i b}
\end{pmatrix}
\begin{pmatrix}
  \cos c & -i \sin c\\ 
  -i \sin c & \cos c
\end{pmatrix} \\
&= 
\begin{pmatrix}
  e^{-i \alpha} & 0\\ 
  0 & e^{i \alpha}
\end{pmatrix}
\begin{pmatrix}
  \cos \beta & -i \sin \beta\\ 
  -i \sin \beta & \cos \beta
\end{pmatrix}
\begin{pmatrix}
  e^{-i \gamma} & 0\\ 
  0 & e^{i \gamma}
\end{pmatrix},
\end{split}
\end{align}
which may be solved to obtain the following equations

\begin{align}\label{eq:su2turnover_angles}
\begin{split}
\tan(\alpha + \gamma) & =
\phantom{-}\tan(b) \, \frac{\cos(a - c)}{\cos(a + c)}, \\
\tan(\alpha - \gamma) & =
-\tan(b) \, \frac{\sin(a - c)}{\sin(a + c)},
\end{split}
\end{align}
and their inverses
\begin{align}\label{eq:su2turnover_angles2}
\begin{split}
\tan(a + c) & =
\phantom{-}\tan(\beta) \, \frac{\cos(\alpha - \gamma)}{\cos(\alpha + \gamma)}, \\
\tan(a - c) & =
-\tan(\beta) \, \frac{\sin(\alpha - \gamma)}{\sin(\alpha + \gamma)}.
\end{split}
\end{align}
$\alpha$ and $\gamma$ can be solved directly from \cref{eq:su2turnover_angles}, and then $\beta$ can be solved from one of the relations given in \cref{eq:su2turnover_angles2}. Although we provide the analytical expressions here, in the code \cite{f3c,f3cpp} we use another method based on diagonalization of $2\times 2$ matrices, explained in detail in Ref.~\cite{simax}. That method is far more efficient and precise compared to calculating the angles via \cref{eq:su2turnover_angles,eq:su2turnover_angles2} due to the fact that the analytic expressions rely on repeated use of trigonometric and inverse trigonometric functions.

\begin{figure*}[htpb]
    \centering
    \includegraphics[width = 1.9\columnwidth]{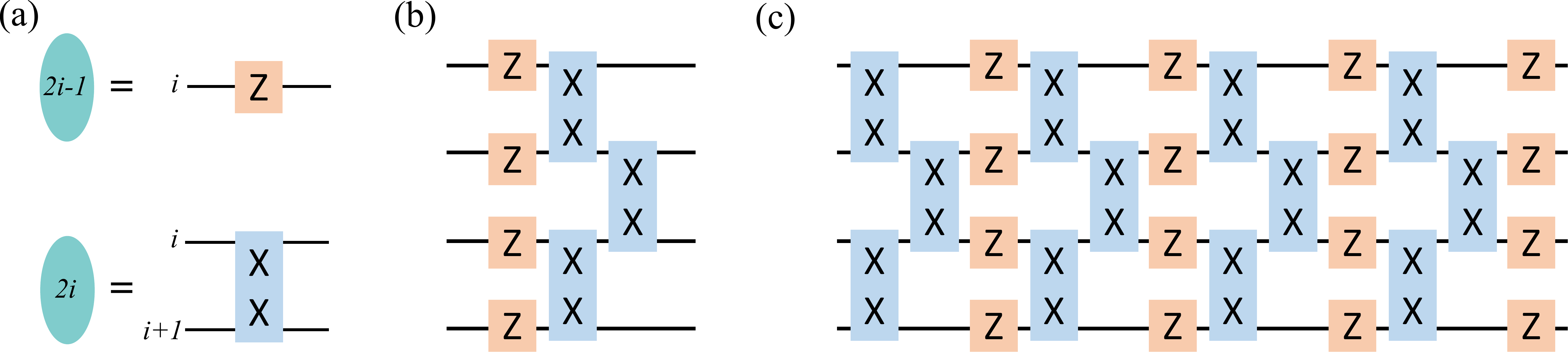}
    \caption{
    \textbf{(a)} block mapping of the 1D TFIM. As can be seen, blocks are not always 2 qubit operators.  \textbf{(b)} The Trotter-decomposed operator for a single time step of the evolution under the 1D TFIM Hamiltonian. \textbf{(c)} Full time evolution circuit after Trotter compression. Each block $XX$ indicates $\exp(-i a XX)$ for some $a$, and each block $Z$ indicates $\exp(-i b Z)$ for some $b$.}
    
    \label{fig:trotter_tfim}
\end{figure*}

Having shown that the elements of the Kitaev chain circuit satisfy the properties
of a block, \cref{thm:blocks,timedependentcompression,timeindependentcompression}  say that 
the Trotter circuit for this model can be compressed into a triangle of blocks; and \cref{thm:triangle_to_square} states that it can be further simplified into a square as given in \cref{fig:kitaev}(c).  
Considering the $n$ spin Kitaev chain, one Trotter step would have $n-1$ 2-qubit $XX$ and $YY$ rotations, leading to $n-1$ blocks with the mapping we provided. This would lead to $n(n-1)/2$ blocks in the final square circuit. Depending on the hardware, this means we either need $n(n-1)/2$ 2-qubit rotations, or considering the circuit implementation of $XX$ and $YY$ rotations via CNOT gates, we need $n(n-1)$ CNOT gates.

\subsection{XY Model}

The Hamiltonian for the 1D XY Model with open boundary conditions is given by 
\begin{align}
    \ham = \sum_{i=1}^{n-1} \big( a_i X_i X_{i+1} + b_i Y_i Y_{i+1} \big)
\end{align}
where $n$ is the number of spins. It can be shown that this Hamiltonian is the sum of two Kitaev chains
\begin{align}
    \ham = \ham_{\text{even X}} + \ham_{\text{odd X}},
\end{align}
where
\begin{align}
\begin{split}
    \ham_{\text{even X}} &= \sum_{\text{even } i} a_i X_i X_{i+1} + \sum_{\text{odd } i} b_i Y_i Y_{i+1}, \\
    \ham_{\text{odd X}} &= \sum_{\text{odd } i} a_i X_i X_{i+1} + \sum_{\text{even } i} b_i Y_i Y_{i+1}.
\end{split}
\end{align}
One can check that these chains are independent from each other, i.e. $[\ham_{\text{even X}},\ham_{\text{odd X}}] = 0$, by observing
\begin{align}\label{eq:kitaevscommute}
\begin{split}
    [X_i X_{i+1}, Y_{i}Y_{i+1}] &= 0, \\
    [X_i X_{i+1}, X_{i+1} X_{i+2}] &= 0, \\
    [Y_i Y_{i+1}, Y_{i+1} Y_{i+2}] &= 0. \\
\end{split}
\end{align}
Therefore the evolution of the XY model can be written as two separate evolutions of Kitaev chains, as given in \cref{fig:kitaev}(d). In fact, \cref{eq:kitaevscommute} also shows that any two terms in the separate evolution circuits commute with each other, and therefore can be moved freely. Considering the fact that $XX$ and $YY$ when individually implemented require $4$ CNOTs in total whereas their combination requires only $2$ CNOTs, it is advantageous to bring $XX$ and $YY$ rotations together for hardware that uses CNOTs. \cref{eq:kitaevscommute} allows us to transform the circuit in \cref{fig:kitaev}(d) to \cref{fig:kitaev}(e) and reduce number of CNOTs by one half.

\subsection{Transverse Field Ising Model}

The Hamiltonian for the 1D TFIM with open boundary conditions is given by
\begin{align}
    \ham = \sum_{i=1}^{n-1}a_i X_i X_{i+1} + \sum_{i=1}^n b_i Z_i.
\end{align}
The time evolution is again factored into Trotter steps of length $\Delta t$; for
a $4$-site chain this can be implemented as shown in \cref{fig:trotter_tfim}.

Consider the mapping $B_{2i-1}$ $\equiv$ $B_{2i-1}(\theta)$ $=$ $\exp(-i\theta Z_i )$ and $B_{2i}$ $\equiv$ $B_{2i}(\theta)$ $=$  $\exp(-i\theta X_i X_{i+1})$. With this mapping, a Trotter step for this model can be written as a product of $B_i$ matrices. Here we show that these operators satisfy the block properties from \cref{block}. Note that this is an example where not all blocks are 2-qubit operators, which shows that blocks can vary in size. The fusion operation is satisfied by this mapping
\begin{align}
\begin{split}
    e^{-ia X_i X_{i+1}}e^{-ib X_i X_{i+1}} &= e^{-i(a+b) X_i X_{i+1}}, \\
    e^{-ia Z_i}\:e^{-ib Z_i}\: &= e^{-i(a+b) Z_i}.
\end{split}
\end{align}
The commutation of blocks with an odd index $i$ is clear
since they do not act on a common spin: $B_{2i-1}$ acts on qubit $i$ and $B_{2i+1}$ acts on qubit $i+1$.
Blocks with even $i$ on the other hand should be checked explicitly, because $B_{2i}$ acts on spins $i$ and $i+1$, and $B_{2i+2}$ acts on spins $i+1$ and $i+2$, i.e., they both act on spin $i+1$. However, since $[X_i X_{i+1},X_{i+1} X_{i+2}]=0$, commutation of independent blocks applies for even indexed blocks as well.
This leaves us with the turnover property,
\begin{align}\label{tfimturnover}
\begin{split}
    e^{-ia X_i X_{i+1}}&e^{-ib Z_i}e^{-ic X_i X_{i+1}} \\
    &= e^{-i\alpha Z_{i}}e^{-i\beta X_i X_{i+1}}e^{-i\gamma Z_{i}}.
\end{split}
\end{align}
To show this, we note that similar to the Kitaev chain,
the exponents form an $\mathfrak{su}(2)$ algebra
\begin{align}\label{tfimsu2}
\begin{split}
    [Z_i,X_i X_{i+1}] &= 2i \: Y_i X_{i+1}, \\
    [Y_i X_{i+1},Z_i] &= 2i \: X_i X_{i+1}, \\
    [X_i X_{i+1},Y_i X_{i+1}] &= 2i \: Z_i . 
\end{split}
\end{align}
Therefore, by the arguments given in the discussion for the Kitaev chain, it can be shown that the turnover property is satisfied for the blocks as defined for the TFIM as well. The relation between the sets of angles are also the same as in \cref{eq:su2turnover_angles} due to the fact that it is the same switch between Euler decompositions of $\mathfrak{su}(2)$.

Having shown that the elements of the TFIM circuit satisfy the properties of a block (\cref{block}), it follows from  \cref{thm:blocks} that the Trotter circuit for this model can be compressed into a triangle of blocks.

Considering the $n$ spin TFIM, one Trotter step would have $n-1$ 2-qubit $XX$ and $n$ 1-qubit $Z$ rotations, leading to $2n-1$ blocks with the mapping we provided. This would lead to $n(2n-1)$ blocks or $n$ Trotter steps in the final square circuit. Depending on the hardware, this means we either need $n(n-1)$ $XX$ rotations, or considering the circuit implementation of $XX$ rotations via CNOT gates, we need $2n(n-1)$ CNOT gates.

\begin{figure*}[htpb]
    \centering
    \includegraphics[width = 2\columnwidth]{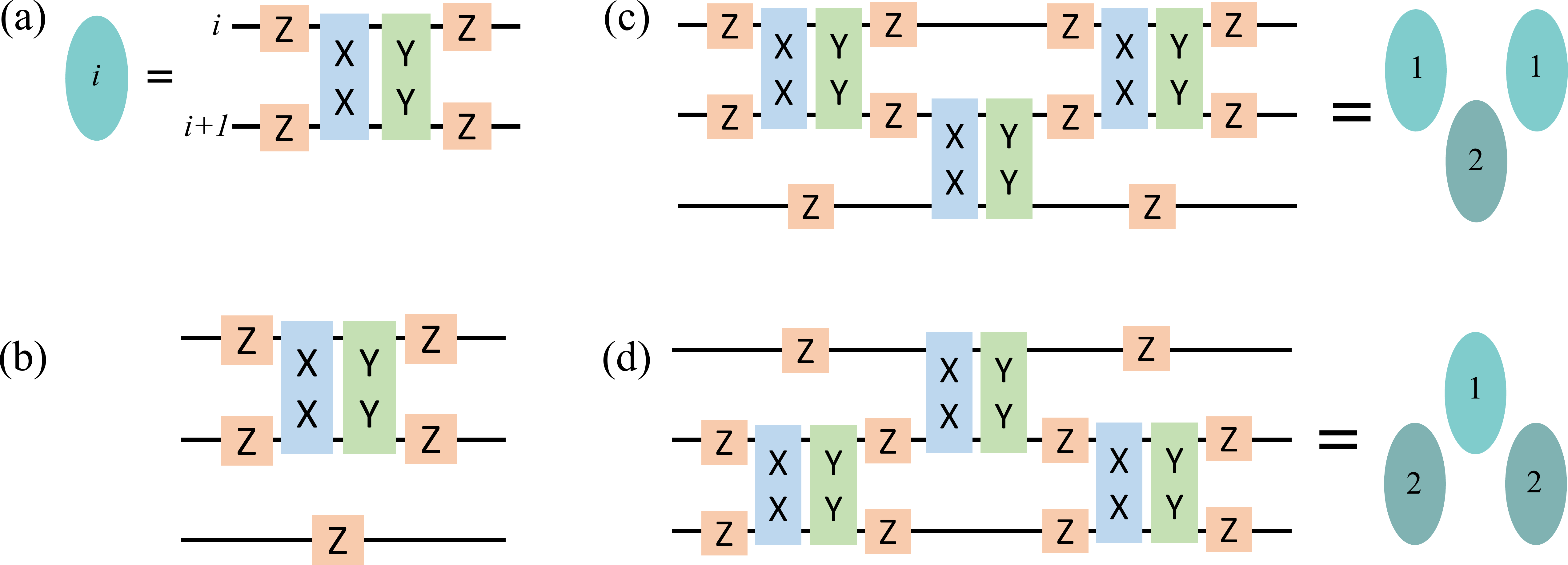}
    \caption{
    \textbf{(a)} block structure for the 1D TFXY model. \textbf{(b)} A generic circuit element that can represent any element in the group $\exp(\kalg)$ where $\kalg$ results from Cartan decomposition of $\galg(\text{TFXY}_3)$. It can be seen that this circuit represent a generic element via Eq.\eqref{eq:relation_k_tfxy2}. \textbf{(c)} Circuit structure for a generic element of the group $\exp(\galg(\text{TFXY}_3))$ obtained via Cartan decomposition corresponding to the involution $\theta(g) = Z_3 \: g \: Z_3$ and Cartan subalgebra given in Eq.~\eqref{eq:tfxy3_subalgebra}. As it can be seen, with the block definition, this structure corresponds to $B_1 B_2 B_1$.  \textbf{(d)} We could have chosen $\theta(g) = Z_1 \: g \: Z_1$ instead, and come up with a generic representation of an element in the group $\exp(\galg(\text{TFXY}_3))$ which corresponds to $B_2 B_1 B_2$. Considering that both panel (c) and panel (d) are generic representations of the same group, this proves that the structure given in panel (a) satisfies the turnover property.
    }
    \label{fig:tfxy}
\end{figure*}

\subsection{Transverse Field XY Model}

The Hamiltonian for the 1D transverse field XY (TFXY) model with open boundary conditions is given by

\begin{align}
    \ham = \sum_{i=1}^{n-1}(a_i X_i X_{i+1} + b_i Y_i Y_{i+1}) + \sum_{i=1}^n c_i Z_i.
\end{align}
Here, the verification of the block properties is somewhat more complex. First, we define
the $B_i$ operator as
\begin{align}\label{eq:tfxy_block_mapping}
\begin{split}
    B_{i} \equiv& e^{-i a\: Z_i}e^{-i b\: Z_{i+1}}e^{-i c\: X_i X_{i+1}}\\
    &e^{-i d\: Y_i Y_{i+1}}\:e^{-i f\: Z_i}e^{-i g\: Z_{i+1}}    
\end{split}
\end{align}
which can be shown diagrammatically as in \cref{fig:tfxy}(a). A Trotter step for this model can be written in terms of the $B_i$ matrices, since the exponents in \cref{eq:tfxy_block_mapping} involve all the terms in the Hamiltonian. We proceed to prove that the mapping \eqref{eq:tfxy_block_mapping} satisfies all block properties.

\emph{Commutation.} First, we prove that the $B_i$ operators satisfy the commutation relation given in \cref{block}. This property follows from the observation that if two blocks are not on top of each other nor nearest neighbors to each other, they do not act on a shared spin. For example, $B_{i}$ acts on qubits $i$ and $i+1$, and $B_{i+2}$ acts on qubits $i+2$ and $i+3$. Therefore they act on different spaces, and commute.

\emph{Fusion.} To prove the fusion property, it suffices to show that the expression given in Eq.~\eqref{eq:tfxy_block_mapping} represents a generic element of a Lie group.
The algebra formed by the exponents in Eq.~\eqref{eq:tfxy_block_mapping} and their commutators is the Hamiltonian algebra for the 2 site TFXY model \cite{kokcu2021fixed} 
(here
we suppress $i=1$ for convenience):
\begin{align}\label{eq:g(2tfxy)}
    \galg(\text{TFXY}_2) = \mathrm{span} \: i \{ X_1 X_2, Y_1 Y_2, X_1 Y_2, Y_1 X_2, Z_1, Z_2 \}.
\end{align}
We can use an involution $\theta: \galg \rightarrow \galg$ 
to form a Cartan decomposition  $\galg = \kalg \oplus \malg$ \cite{raczka1986theory,*gilmore2008lie,*hall2015lie}. Specifically,
the involution $\theta(g) = -X_1 X_2 \:g^T \:X_1 X_2$ (which counts the number of $Z$ matrices),
where $g \in \galg(\text{TFXY}_2)$,
yields
\begin{align}
\begin{split}
    \kalg &= \mathrm{span} \: i \{ Z_1, Z_2 \}, \\
    \malg &= \mathrm{span} \: i \{ X_1 X_2, Y_1 Y_2, X_1 Y_2, Y_1 X_2 \}.
\end{split}
\end{align}
We can choose a maximal Abelian subalgebra $\halg \in \malg$ as
\begin{align}
    \halg = \mathrm{span} \: i \{ X_1 X_2, Y_1 Y_2\}.
\end{align}
The ``KHK theorem'' \cite{raczka1986theory,*gilmore2008lie,*hall2015lie} states that any element in the
group $G \in \exp(g)$ for $g\in \galg(\text{TFXY}_2)$ can be written as
\begin{align}\label{eq:tfxy2grouT_elem}
\begin{split}
    G =&  e^{-i a\: Z_1}e^{-i b\: Z_2}e^{-i c\: X_1 X_2}\\ &e^{-i d\: Y_1 Y_2}\:e^{-i f\: Z_1}e^{-i g\: Z_2},    
\end{split}
\end{align}
which is precisely Eq.~\eqref{eq:tfxy_block_mapping}. Thus,
two blocks can be fused because the right hand side of 
Eq.~\eqref{eq:tfxy_block_mapping} is a generic element in the Lie group 
$\exp(\galg(\text{TFXY}_2))$.

\emph{Turnover.} To show that the blocks $B_i$ satisfy the turnover property, 
we consider the Hamiltonian algebra of the 3 spin TFXY model,
\begin{align}\label{eq:g(3tfxy)}
\begin{split}
    \galg(\text{TFXY}_3) = \mathrm{span} \: i \{ &X_1 X_2, Y_1 Y_2, X_1 Y_2, Y_1 X_2, X_2 X_3,\\
    &Y_2 Y_3, X_2 Y_3, Y_2 X_3 , Z_1, Z_2 , Z_3,\\
    &X_1 Z_2 X_3, X_1 Z_2 Y_3, Y_1 Z_2 X_3, \\
    &Y_1 Z_2 Y_3  \}.
\end{split}
\end{align}
Using the involution $\theta(g) = Z_3 g Z_3$, 
where $g \in \galg(\text{TFXY}_3)$,
we partition
$\galg(\text{TFXY}_3)$ into
\begin{align}
\begin{split}
    \kalg = \mathrm{span} \: i \{&X_1 X_2, Y_1 Y_2, X_1 Y_2, Y_1 X_2, Z_1, Z_2, Z_3 \},\\
    \malg = \mathrm{span} \: i \{ &X_2 X_3, Y_2 Y_3, X_2 Y_3, Y_2 X_3, X_1 Z_2 X_3,\\
    &X_1 Z_2 Y_3, Y_1 Z_2 X_3, Y_1 Z_2 Y_3  \}.
\end{split}
\end{align}
We note that $\kalg$ is simply
\begin{align}\label{eq:relation_k_tfxy2}
    \kalg= (I_3 \otimes \galg(\text{TFXY}_2)) \oplus \mathrm{span} \: i \{ Z_3\}, 
\end{align}
where $I_3$ is the identity operator for the 3rd spin. The relation between $\kalg$ and $\galg(\text{TFXY}_2)$ together with Eq.~\eqref{eq:tfxy2grouT_elem} 
allows us to represent any element in the group  $\exp(\kalg)$ as in \cref{fig:tfxy}(b). Therefore by choosing the following Cartan subalgebra
\begin{align}\label{eq:tfxy3_subalgebra}
    \halg = \mathrm{span} \: i \{ X_1 X_2 , Y_1 Y_2 \},
\end{align}
the KHK theorem implies that any element $G$ in the group $\exp(\galg(\text{TFXY}_3))$ can be
written as $G = \exp(k_1) \exp(h) \exp(k_2)$ where $k_1,k_2 \in \kalg$ and $h \in \halg$. This may be
represented by the circuit given in  \cref{fig:tfxy}(c), which is a V shaped 3 block structure, with each
block representing the operator in Eq.~\eqref{eq:tfxy_block_mapping}.

The Cartan decomposition is not unique --- we could have alternatively used
the involution
$\theta(g) = Z_1 \: g \: Z_1$. 
This results in a swap of the 1st and 3rd spins. Thus, any element in the group $\exp(\galg(\text{TFXY}_3))$ can be equally well 
represented by an inverted circuit as shown in \cref{fig:tfxy}(d). 
This means that we have found two equivalent block structures that both represent a generic element
in the group $\exp(\galg(\text{TFXY}_3))$; these can then be transformed into
one another, which proves the turnover property for the block mapping of
Eq.~\eqref{eq:tfxy_block_mapping}.

The above proofs for both fusion and turnover properties for the block structure \cref{eq:tfxy_block_mapping} are based on the algebraic properties of $\galg(\text{TFXY}_2)$ and $\galg(\text{TFXY}_3)$, and do not show how to calculate the coefficients after the fusion or turnover. Here we provide another Lie algebra based method to calculate the coefficients after the fusion and turnover operations.

\emph{Fusion calculation.} We already proved that the $B_i$ matrices given in \cref{eq:tfxy_block_mapping} form a Lie group. The Lie algebra associated with that group turned out to be the Hamiltonian algebra of a 2 site TFXY model \eqref{eq:g(2tfxy)}. We can use a different basis for that algebra, and rewrite it as
\begin{align}
\begin{split}
    \galg(\text{TFXY}_2) = \mathrm{span} \: i \{& (X_1 X_2 \pm Y_1 Y_2)/2, \\ 
    &(X_1 Y_2 \pm Y_1 X_2)/2,\\ 
    &(Z_1 \pm Z_2)/2 \}.
\end{split}
\end{align}
We can split it into two algebras $\galg_+$ and $\galg_-$ given by
\begin{align}
\begin{split}
    \galg_- = \mathrm{span} \: i \{& (X_1 X_2 - Y_1 Y_2)/2, (Z_1 + Z_2)/2, \\ 
    &(X_1 Y_2 + Y_1 X_2)/2 \},
\end{split}
\end{align}
and
\begin{align}
\begin{split}
    \galg_+ = \mathrm{span} \: i \{& (X_1 X_2 + Y_1 Y_2)/2, (Z_1 - Z_2)/2, \\ 
    &(X_1 Y_2 - Y_1 X_2)/2\}.
\end{split}
\end{align}
A simple check on the commutation relations for the basis elements yields that both algebras are isomorphic to $\mathfrak{su}(2)$, and both are independent, i.e. $[\galg_+,\galg_-]=0$. Renaming $(X_i X_{i+1}\pm Y_i Y_{i+1})/2\equiv Q^\pm_i$ and $(Z_i\pm Z_{i+1})/2\equiv R^\pm_i$ for convenience, one can rewrite \cref{eq:tfxy_block_mapping} as a multiplication of two independent $SU(2)$ elements  
\begin{align}
\begin{split}
    B_{i} \equiv& e^{-i (a+b)\: R^+_i}e^{-i (a-b)\: R^-_i}e^{-i (c+d)\: Q^+_i}\\
    &e^{-i (c-d)\: Q^-_i}e^{-i (f+g)\: R^+_i}e^{-i (f-g)\: R^-_i}  \\
    =& \big( e^{-i (a+b)\: R^+_i}e^{-i (c-d)\: Q^-_i}e^{-i (f+g)\: R^+_i} \big)\\
    &\big(e^{-i (a-b)\: R^-_i}e^{-i (c+d)\: Q^+_i}e^{-i (f-g)\: R^-_i} \big).
\end{split}
\end{align}
Then, the fusion operation will boil down to multiplication of two set of $SU(2)$ matrices represented by an Euler decomposition, which only requires the $\mathfrak{su}(2)$ turnover that we provided in \cref{eq:su2turnover_angles}.Each $SU(2)$ multiplication will require one $\mathfrak{su}(2)$ turnover operation, therefore fusion of two TFXY blocks require two $\mathfrak{su}(2)$ turnover operations in total.

\emph{Turnover calculation.} This proof entirely relies on the fact that the TFXY block \cref{eq:tfxy_block_mapping} can be written as a combination of TFIM blocks, i.e. $Z$ and $XX$ rotations. This can be achieved by plugging
\begin{align}
    Y_i Y_{i+1} = e^{i \frac{\pi}{4} Z_i} e^{i \frac{\pi}{4} Z_{i+1}} X_i X_{i+1} e^{-i \frac{\pi}{4} Z_i} e^{-i \frac{\pi}{4} Z_{i+1}},
\end{align}
into \cref{eq:tfxy_block_mapping}. Therefore a V shaped TFXY block structure $B_i\:B_{i+1}\:B_i$ can be transformed to a TFIM triangle with height 3 that acts on qubits $i$, $i+1$ and $i+2$ given as 
\begin{figure}[htpb]
    \centering
    \includegraphics[width = 0.9\columnwidth]{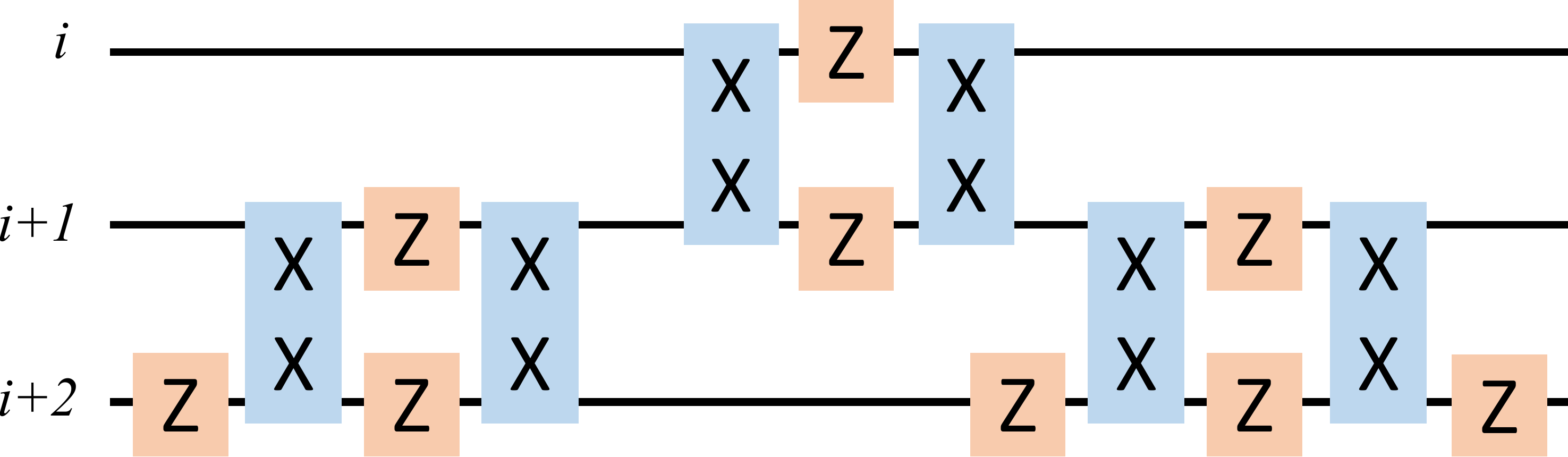}
\end{figure}
\\
\noindent where we deliberately grouped some gates together. All three groups can be rewritten as a TFXY block by applying the TFXY fusion operation we explained above once. For example, the leftmost group is a product of the block $B_{i+1} \equiv \exp(ia\:Z_{i+2}) \: \exp(ib\:X_{i+1} X_{i+2})$, which is a TFXY block with some parameters equal to $0$, with another block $B_{i+1} \equiv \exp(ic\:Z_{i+1}) \: \exp(id\:Z_{i+2}) \: \exp(ie\:X_{i+1} X_{i+2})$. Therefore, we end up with a $\Lambda$ shaped TFXY block structure. The reverse can be done in the same way by compressing $\Lambda$ into an upside down triangle instead.

To form the TFIM triangle from the V shaped TFXY block structure, 26 $\mathfrak{su}(2)$ turnover operations must be applied. To combine the 2-qubit pieces into a TFXY block, 3 TFXY merge operations must be applied. Each TFXY merge requires 2 $\mathfrak{su}(2)$ turnovers, leading to a total of 32 $\mathfrak{su}(2)$ turnover operations for one TFXY turnover operation.   

The constructive proofs we provide here use trigonometric and inverse trigonometric function evaluations frequently, which are more complicated than doing linear algebra operations. Due to this fact, we provide a linear algebra based method to evaluate the angles for the fusion and turnover operations of a TFXY model in \cite{simax}. Also with this method, the TFXY turnover can be performed at a cost equivalent to 20 $\mathfrak{su}(2)$ turnovers, which is another point on the efficiency. However, we include the Lie algebra based proof here as well because it provides additional insight about the algebraic structure of the TFXY model.

Since the mapping of the 1D TFXY model in \cref{eq:tfxy_block_mapping} satisfies the  
properties listed in \cref{block}
it can be compressed into a triangle according to \cref{trianglezigzag}.

For an $n$ spin TFXY model, one Trotter step has $n-1$ $XX$ and $YY$ rotations with $n$ 1-qubit $Z$ rotations, leading to $n-1$ blocks with the mapping we provided for the TFXY model. This leads to $n(n-1)/2$ blocks in the final square circuit. Depending on the hardware, this means we either need $n(n-1)$ 2-qubit rotations (2 per block), or considering the circuit implementation of adjacent $XX$ and $YY$ rotations via 2 CNOT gates \cite{vidal2004universal}, we need $n(n-1)$ CNOT gates.

\subsection{Generalized TFXY and Free Fermions}

The algebra generated by the elements in Eq.~\eqref{eq:g(2tfxy)} 
and the generic group representation in Eq.~\eqref{eq:tfxy2grouT_elem} imply that we can generalize the results for the TFXY model to the following Hamiltonian (which can be called generalized TFXY Hamiltonian)
\begin{align}
\begin{split}
    \ham =& \sum_{i=1}^{n-1}(a_i X_i X_{i+1} + b_i Y_i Y_{i+1}  \\ 
    & + c_i X_i Y_{i+1}+ d_i Y_i X_{i+1})+\sum_{i=1}^n f_i Z_i.
\end{split}
\end{align}
with the same block mapping from Eq.~\eqref{eq:tfxy_block_mapping}. Via the Jordan--Wigner transformation, this Hamiltonian can be mapped to
\begin{align}
\begin{split}
    \ham =& \sum_{i=1}^{n-1}(\alpha_i \hat{c}^\dagger_i \hat{c}_{i+1} + \alpha^*_i \hat{c}^\dagger_{i+1} \hat{c}_{i} \\ 
    &+ \beta_i \hat{c}_i \hat{c}_{i+1} + \beta^*_i \hat{c}^\dagger_{i+1} \hat{c}^\dagger_{i})+\sum_{i=1}^n \gamma_i \hat{c}^\dagger_i \hat{c}_{i}.
\end{split}
\end{align}
where $\hat{c}_{i} (\hat{c}^\dagger_i)$ is the fermion annihilation (creation) operator on site $i$. Therefore, using the block mapping in Eq.~\eqref{eq:tfxy_block_mapping}, we can compress Trotter decompositions under any free, one-dimensional Hamiltonian with nearest neighbor hopping and open boundary conditions.

\subsection{Comparison of the Models}

We introduced three different block mappings with two types of fusion and turnover operations. The turnover operation for the blocks given in the Kitaev chain, XY model and TFIM use Euler decompositions of $\mathfrak{su}(2)$, whereas the ones for TFXY and generalied TFXY models use algebraical properties of $\galg(\text{TFXY}_3)$. The $\mathfrak{su}(2)$ turnover is approximately 20 times more efficient than the $\galg(\text{TFXY}_3)$ turnover, because the former requires diagonalization of a $2\times2$ matrix whereas the latter requires simultaneous diagonalization of four $2\times2$ matrices \cite{simax}, although in practice this only matters when the system size grows large.

Although some models share the same type of fusion and turnover properties, 
the resulting circuits vary in the number of CNOT or 2-qubit XX/YY rotation gates required for $n$ qubits, as shown in \cref{tab:cnotcounts}
\begin{table}[h]
\begin{center}
\begin{tabular}{ c | c | c | c }
 Block Mapping& $XX+YY$  & \#CNOT & Turnover Type \\ 
 \hline
 Kitaev Chain & $\frac{n(n-1)}{2}$ & $n(n-1)$ & $\mathfrak{su}(2)$ \\  
 XY Model & $n(n-1)$ & $n(n-1)$ & $\mathfrak{su}(2)$\\
 TFIM & $n(n-1)$ & $2n(n-1)$& $\mathfrak{su}(2)$ \\
 TFXY & $n(n-1)$&$n(n-1)$ & $\galg(\text{TFXY}_3)$
\end{tabular}
\caption{Number of 2-qubit gates required for the constant depth
circuits on $n$ qubits for the models discussed. Here, $XX+YY$
indicates the total number of $XX$ and $YY$ rotations.}
\label{tab:cnotcounts}
\end{center}
\end{table}

Because all of these models can be derived from TFXY model, the circuits for the Kitaev chain, XY model, and TFIM do not have to be compressed by the model-specific block mappings we discussed above; one can consider them as specific instances of the TFXY model and use the TFXY block mapping instead. For the TFIM, if one uses the TFXY block structure, the height of the Trotter step zigzag (\cref{zigzag}) reduces by half, leading to 4 or 8 times fewer turnover operations depending on whether the system is time dependent or not due to  \cref{timedependentcompression,timeindependentcompression}. Although the TFXY turnover is 20 times more expensive than the $\mathfrak{su}(2)$ turnover, and this leads to $3-5$ times more time spent for the compression, using TFXY block structure leads to half the number of CNOT gates in the resulting circuits, which proves particularly useful for quantum computers that rely on the CNOT 2-qubit gate (such as transmon-based hardware).

\section{Example: Adiabatic State Preparation}
\label{sec:specific_example}

\begin{figure*}[htpb]
    \centering
    \includegraphics[width = 2.0\columnwidth]{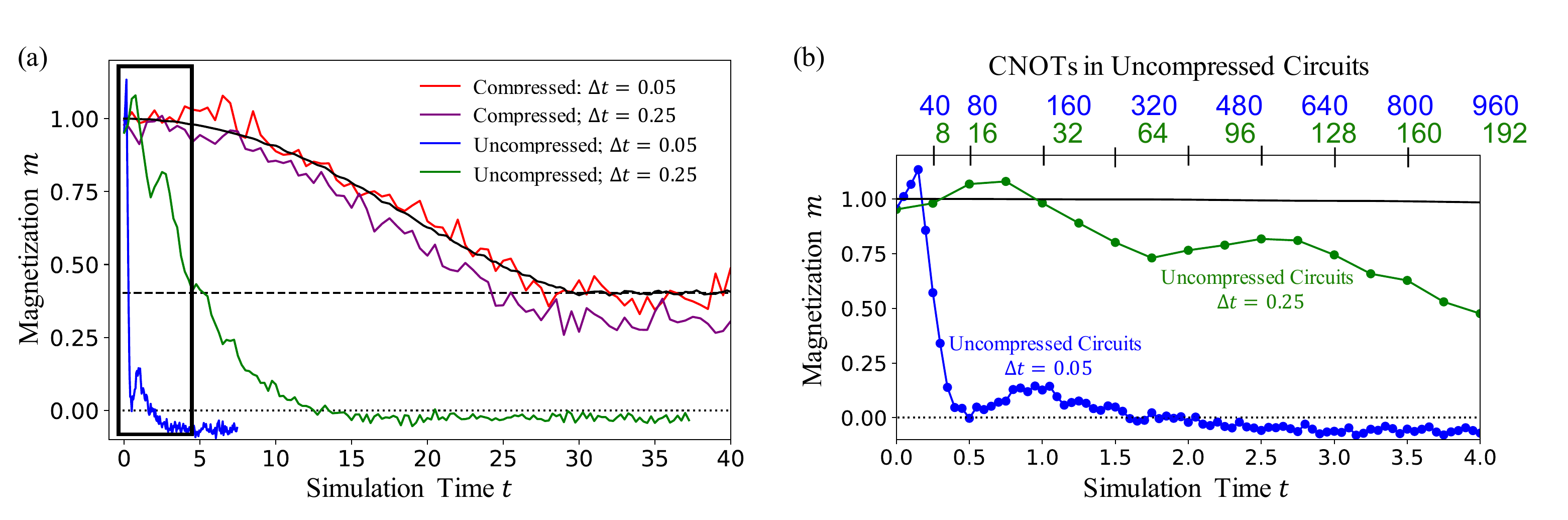}
    \caption{Simulation results from \emph{ibmq\_brooklyn} comparing compressed circuits versus uncompressed circuits derived from standard Trotter decomposition.  A 5-qubit system is evolved under a time-dependent Hamiltonian that is adiabatically varied from an Ising model to a transverse field Ising model (TFIM) from time $t=0$ to time $t=30$, and then evolved under the final (time-independent) TFIM for the remaining time. 
    (a) Simulation results from compressed circuits (red, purple) and uncompressed circuits (blue, green) for two different values of Trotter step-size compared to the numerically exact results (solid black line).  
    The black dashed line gives the expected magnetization of the final TFIM, which is maintained through the final evolution ($t > 30)$. The red curve shows higher fidelity than the purple curve due to smaller Trotter error. Results from the uncompressed circuits quickly decay to random noise, denoted by the dotted black line at a magnetization of 0.  
    (b) A zoomed-in view of the uncompressed circuit results within the portion of the plot in panel (a) in the solid black box.  CNOT counts for each circuit throughout the simulation are denoted on the top horizontal axis with color corresponding to the Trotter step-size used in the uncompressed circuits. The compressed circuits contain 20 CNOTs for all time-steps.
    }
    \label{fig:results}
\end{figure*}

An important step of many static and dynamic simulations problems on quantum computers involves preparing the ground state of the relevant Hamiltonian.  In general, this ground state is non-trivial to prepare on the qubits, even in cases where the state may be known analytically.  In this example, we demonstrate how our compressed circuits can be used to efficiently prepare non-trivial ground states on the quantum computer with high-fidelity via adiabatic state preparation (ASP) \cite{aspuru2005simulated, barends2016digitized}.  In ASP, qubits are prepared in the ground state of some initial Hamiltonian $\ham_I$, which is presumed to be trivial to prepare.  The system is then evolved under a parameter-dependent Hamiltonian $\ham(s) = (1-s)\ham_I + s\ham_P$ as $s$ is varied from 0 to 1. Here, $\ham_P$ is the problem Hamiltonian, whose ground state is desired and in general is non-trivial to prepare.  If $s$ is varied slowly enough, the system will remain in the ground state of the instantaneous Hamiltonian according to the adiabatic theorem \cite{born1928beweis}.  Just how slow this variation needs to be is dependent upon the size of the energy gap between the ground state and first excited state.  

Evolving the system under $\ham(s)$ as $s$ is varied from 0 to 1 can be viewed as evolving the system under a time-dependent Hamiltonian.  Therefore, assuming that the instantaneous Hamiltonian remains within the class of free-fermionic models while varying $s$, we can apply our compression techniques to efficiently produce minimal, fixed-depth circuits for such an adiabatic evolution.  Our technique offers an enormous advantage in ASP, as the ability to generate fixed-depth circuits in an efficient amount of time for an arbitrarily large number of time-steps allows us to vary $s$ arbitrarily slowly, thereby guaranteeing adiabatic evolution with one simple fixed-depth circuit. As shown in the companion paper \cite{simax}, the compression can be
performed efficiently, and thus poses no limitation
on the adiabaticity.

Without our compression techniques, the quantum circuits generated via standard Trotter decomposition rapidly become too large to produce high-fidelity results on NISQ devices after a certain number of time-steps \cite{smith2019simulating}.  With such a noise-imposed constraint on the total number of time-steps $N$, a compromise must be made between the total adiabatic evolution time $T$ and the size of the time-step $\Delta t$, as $N = T/\Delta t$. To minimize $N$, one can minimize $T$, although if the total time is made too short, the Hamiltonian is varied too quickly and adiabaticity will not be achieved.  Alternatively, one can maximize $\Delta t$, although if the time-step is made too large Trotter error will become too large.  By eschewing the constraint on $N$ altogether, our compressed circuits allow $T$ to be made as large as necessary to achieve adiabaticity and $\Delta t$ to be made as small as necessary to practically eliminate Trotter error, thereby enabling high-fidelity ASP within this class of Hamiltonians. 

As an example, we demonstrate ASP of a spatially uniform, 5-spin TFIM with open boundary conditions, comparing performance between compressed circuits and circuits generated with standard Trotter decomposition (i.e., uncompressed circuits).  The parameter-dependent Hamiltonian is given by $\ham(s) = (1-s)\ham_I + s\ham_P$ where $\ham_I = h \sum_{i=1}^{n} Z_i$ and $\ham_P = J_P \sum_{i=1}^{n-1} X_i X_{i+1} + h \sum_{i=1}^{n} Z_i$, where $n=5$ is the number of qubits.  

Varying the parameter $s$ from $0$ to $1$ is equivalent to the following time-dependent Hamiltonian:
\begin{align}
    \ham_{ASP}(t) = J(t) \sum_{i=1}^{n-1} X_i X_{i+1} + h \sum_{i=1}^{n} Z_i
    \label{eq:H_asp}
\end{align}
where $J(0)=0$, and $J(t)$ is increased linearly to a final time $t=T$ such that $J(T) = J_P$ as defined in the problem Hamiltonian.  

The qubits are initialized in the ground state of $\ham_I = \ham_{ASP}(0)$, which in this case is simply all qubits in the spin-up orientation, a trivial state to prepare on the quantum computer.  Next, we evolve the qubits under the time-dependent Hamiltonian $\ham_{ASP}$ from $t=0$ to $t=T$ using Trotter decomposition to break this evolution into small time-steps of size $\Delta t$. If $T$ is large enough to achieve an adiabatic evolution, and if $\Delta t$ is small enough to minimize Trotter error, the final state of the qubits after this evolution will be the ground state of our problem Hamiltonian $\ham_P = \ham_{ASP}(T)$, which in general is non-trivial to prepare.  Finally, we end the simulation by evolving the system under the time-independent problem Hamiltonian $\ham_P = \ham_{ASP}(T)$.  If adiabatic simulation was successful, the system will remain in the stationary ground state of the problem Hamiltonian for the entirety of this final evolution.  

We use \emph{ibmq\_brooklyn}, a quantum processor that relies on CNOT as a 2-qubit gate, to generate our results. To compress the Trotter circuit, we applied TFXY block structure rather than TFIM block structure to minimize the number of CNOTs as given in \cref{tab:cnotcounts}.
\Cref{fig:results} shows simulation results 
with $h = -1$, $J_P = -2$, and $T=30$ using compressed circuits (red and purple) and uncompressed circuits (green and blue). Both plots show the average magnetization along the $z$-direction of the 5-qubit system versus time as it evolved under $\ham_{ASP}$.   The average magnetization for an $n$-spin system is given by $\langle m(t)\rangle \equiv  \frac{1}{n} \sum_i \sigma_i^z(t)$.  The solid black line gives the exact system magnetization of the instantaneous Hamiltonian $\ham_{ASP}(t)$ (computed numerically via exact diagonalization), and the black dashed line shows the expected magnetization of the final Hamiltonian $\ham_P = \ham_{ASP}(T)$.  If the system is truly varied adiabatically, its magnetization should reach this value at $T=30$ and stay at this value throughout the remainder of the evolution under the problem Hamiltonian.  

The red and purple curves in \cref{fig:results}(a) show simulation results from compressed circuits executed on \emph{ibmq\_brooklyn} using two different values for $\Delta t$ used for the Trotter decomposition. We applied readout error mitigation and zero-noise extrapolation to these results.  The red curve is in remarkably good agreement with the exactly computed values, while the purple curve stabilizes at a final magnetization that is slightly smaller than expected;  this discrepancy is due to Trotter error.  As is clear from \cref{fig:results}, $\Delta t = 0.05$ is sufficiently small, while $\Delta t = 0.25$ leads to non-negligible Trotter error.  While $\Delta t = 0.05$ appears sufficiently small in terms of Trotter error, we note that our compressed circuits allow for $\Delta t$ to be arbitrarily small.

In comparison, the blue curve in \cref{fig:results}(a) shows the magnetization from the same quantum processor using uncompressed circuits and a time-step of $\Delta t = 0.05$.  We see an immediate loss of fidelity in the results as the uncompressed circuits quickly become too deep for NISQ hardware.  The magnetization quickly decays towards $0$, which indicates the results are simply random noise, denoted by the dotted black line.  To improve upon these results, we increased $\Delta t$, such that fewer time-steps are required to reach the same total simulation time, and which in turn means the circuit depths of the uncompressed circuits grows more slowly with the simulation time [\cref{fig:results}(b)].  This is shown as a green curve, which shows hardware results from uncompressed circuits with a time-step of $\Delta t = 0.25$. While its performance appears slightly better, these results, too, eventually decay to a magnetization of zero well before the adiabatic evolution is complete.  

\Cref{fig:results}(b) shows a zoomed-in view of the portion of the plot in panel a surrounded by the solid black box.  Quantum hardware simulation results for the uncompressed circuits for both $\Delta t$ values are shown, with the number of CNOTs in the circuit used for each time-step given on the top horizontal axis in the corresponding color.  The number of CNOTs per circuit for $\Delta t = 0.05$ grows much faster with total simulation time than for $\Delta t = 0.25$, which is why the green curve maintains higher-fidelity at initial simulation times.  For reference, our compressed circuits only contain 20 CNOTs for all time-steps for this 5-qubit system.

\section{Discussion and outlook}
\label{sec:discussion}
Our results have several implications.
First, we have shown that time evolution for the models above can be
performed with a fixed depth circuit, even for a time-dependent Hamiltonian. Note that if the entire circuit for time evolution does not
follow the required block structure, sub-circuits can still be optimized in this way (if the sub-circuits do satisfy the block structure).
While we have focused on the generalized TFXY model and its
descendants, the approach here should extend to any model that
has similarly structured frustration graphs \cite{chapman2020characterization},
or can be mapped onto a free fermionic Hamiltonian with nearest neighbor hopping.
We should note that the compression method cannot be universally applied,
as is evident from the No-Fast-Forwarding theorem \cite{atia2017fast},
but it is possible
that there are other classes of models where the compression method
works, such as those outlined by Gu et al. \cite{gu2021fast}.

Considering the fact that our method does not require the Hamiltonian to be translationaly invariant or time independent, the class of free fermionic Hamiltonians 
that can be compressed is relatively broad. For example,  free fermionic systems after a quench from a clean to a disordered system 
which show complex
dynamics of the entanglement entropy \cite{Lundgren2019}. Certain topological models, e.g. the staggered free fermionic system of the Rice-Mele model \cite{ricemele}, are also accessible. There, the adiabatic evolution is characterized by the Chern number of the system; although
NISQ quantum computers can measure the Chern number in certain cases \cite{xiao2021robust}, adiabatic evolution
requires a compressed circuit. Free fermionic systems also emerge as mean field approximations of many physical systems of interest such as superconductivity and charge density waves \cite{cdw, altland2010condensed}. With our method, a fixed depth circuit can be generated to simulate \emph{non-equilibrium} mean field theories.

Free fermionic time-evolution can be used for ground state preparation via adiabatic evolution, which we have chosen as an example application in this work. More generally,
it can be shown that any generic Bogolyubov transformation  \cite{Valatin1958ja,Bogolyubov1958km}, i.e. unitary transformations of one particle creation-annihilation operators, can be written as a free fermionic time-evolution. Therefore the compressed circuits can be used to generate fermionic Gaussian States as a Variational Quantum Eigensolver (VQE) ansatz for the ground state of any free fermionic Hamiltonian, or as a starting approximation for interacting systems. This was recently demonstrated for Slater determinants\cite{babush1}, Hartree-Fock anzatse\cite{babush2}, and generic fermionic Gaussian states\cite{babush3}. 
However, these take a global approach to the problem --- mapping the fermionic
evolution on $n$ sites to an $n \times n$ matrix, and producing the circuit
via Givens rotations. The compression method presented here relies only on \emph{local}
operations of the circuit elements; this obviates the need to work with large
matrices, and extending the range of applicability (as we do not rely on
a global matrix mapping). 

From a broader perspective, we have revealed a new method for
manipulating and compressing circuit elements that have three
relatively simple to check properties. For the time evolution problem
discussed in this work, the blocks
are composed of 1- and 2-qubit sets of operations that are derived from particular models--- but this is
not a requirement, and the circuit operations work equally well
for larger unitaries as long as the properties of blocks
are satisfied. The blocks also do not have to be identical (as can be seen in the Kitaev chain example) and they do not even have to be operating on the same number of qubits (as can
be seen in the TFIM example above). This suggests that the ideas outlined
above could be incorporated into transpiler software, and that the method we proposed is far more general than Givens rotation based methods listed above.

We anticipate that the ideas developed here can be
extended and applied beyond time evolution. In particular,
we are investigating whether circuits that arise from the Quantum Approximate Optimization Algorithm may be treated in this fashion. Similarly, an extension to higher dimensions may be possible, enabling simulations on quantum hardware far beyond
the current limitations.

\vspace{0.1in}
\begin{acknowledgments}
We acknowledge helpful conversations with Eugene Dumitrescu and Thomas Steckmann.
EK, JKF and AFK were supported by the Department of Energy, Office of Basic Energy Sciences, Division of Materials Sciences and Engineering under Grant No. DE-SC0019469. JKF was also supported by the McDevitt bequest at Georgetown University.
DC and RVB are supported by the Laboratory Directed Research and Development Program of Lawrence Berkeley National Laboratory under U.S. Department of Energy Contract No. DE-AC02-05CH11231.
LB and WAdJ were supported by the U.S. Department of Energy (DOE) under Contract No.  DE-AC02-05CH11231,  through the Office of Advanced Scientific Computing  Research  Accelerated  Research  for  Quantum Computing  Program.This research used
resources of the Oak Ridge Leadership Computing Facility, which is a
DOE Office of Science User Facility supported under Contract
No.~DE-AC05-00OR22725.
We acknowledge the use of IBM Quantum services for this work. The views expressed are those of the authors, and do not reflect the official policy or position of IBM or the IBM Quantum team.
\end{acknowledgments}

\bibliography{refs}
\bibliographystyle{apsrev4-2}

\end{document}